\setlist[enumerate]{leftmargin=.5in}
\setlist[itemize]{leftmargin=.5in}
\crefname{hypothesis}{Hypothesis}{Hypotheses}
\title{Extinction and quasi-stationarity for \\discrete-time, endemic SIS and SIR models}
\author{
 Sebastian Schreiber\thanks{Department of Evolution and Ecology and Center for Population Biology, University of California, Davis, California 95616, United States (\email{sschreiber@ucdavis.edu})}
 \and
  Shuo Huang\thanks{Mathematics and Science College, Shanghai Normal University, Shanghai 200234, China (\email{1000441256@smail.shnu.edu.cn}, \email{jiangjf@shnu.edu.cn})}
 \and
   Jifa Jiang$^\dagger$
 \and
  Hao Wang\thanks{Department of Mathematical and Statistical Sciences, University of Alberta, Edmonton, Alberta T6G 2G1, Canada (\email{hao8@ualberta.ca})}
  }
\newcommand{\eig}{\alpha}
\newcommand{\vs}{\vskip 0.05in}
\begin{document}

\maketitle

\begin{abstract}
Stochastic discrete-time SIS and SIR models of endemic diseases are introduced and analyzed. For the deterministic, mean-field model, the basic reproductive number $R_0$ determines their global dynamics. If $R_0\le 1$, then the frequency of infected individuals asymptotically converges to zero. If $R_0>1$, then the infectious class uniformly persists for all time; conditions for a globally stable, endemic equilibrium are given. In contrast, the infection goes extinct in finite time with probability one in the stochastic models for all $R_0$ values. To understand the length of the transient prior to extinction as well as the behavior of the transients, the quasi-stationary distributions and the associated mean time to extinction are analyzed using large deviation methods. When $R_0>1$, these mean times to extinction are shown to increase exponentially with the population size $N$. Moreover, as $N$ approaches $\infty$, the quasi-stationary distributions are supported by a compact set bounded away from extinction; sufficient conditions for convergence to a Dirac measure at the endemic equilibrium of the deterministic model are also given. In contrast, when $R_0<1$, the mean times to extinction are bounded above $1/(1-\alpha)$ where $\alpha<1$ is the geometric rate of decrease of the infection when rare; as $N$ approaches $\infty$, the quasi-stationary distributions converge to a Dirac measure at the disease-free equilibrium for the deterministic model. For several special cases, explicit formulas for approximating the quasi-stationary distribution and the associated mean extinction are given. These formulas illustrate how for arbitrarily small $R_0$ values, the mean time to extinction can be arbitrarily large, and how for arbitrarily large $R_0$ values, the mean time to extinction can be arbitrarily large. 
\end{abstract}

\begin{keywords}
infectious diseases, discrete-time SIS model, discrete-time SIR model, times to extinction, quasi-stationary distributions, large deviations
\end{keywords}


\section{Introduction}

Infectious disease modeling has been one of the most important topics in mathematical biology~\citep{Keeling2011}. A recent Google scholar search\footnote{On September 8th, 2019, a search on Google scholar with the search term ``disease model SIR OR SIS'' returned $1.45$ million results.} reveals over a million studies referencing SIS (susceptible-infected-susceptible) and SIR (susceptible-infected-recovered) models. Most of these studies use deterministic, continuous-time equations. However, in seasonal systems or systems where measurements are taking at regular time intervals (e.g. day or week), discrete-time models play an important role~\citep{anderson_may1991,allen1994,allen_burgin2000,klepac_pomeroy2009,Keeling2011}. For many of these models, the basic reproductive number, $R_0$, determines whether a disease can persist or not in a population. If $R_0>1$, persistence often occurs. While if $R_0<1$, the disease-free state (i.e. extinction) often is globally stable and the infection is lost asymptotically as time marches to infinity.

When considering finite populations without external sources of infection, Markov chain models typically predict that the disease goes extinct in finite-time whether $R_0>1$ or $<1$. To understand this puzzling difference between the asymptotic behaviors of the deterministic and stochastic models~\citep{bartlett1966introduction,Keeling2011,diekmann2012mathematical}, one can use the concept of quasi-stationarity that describes the long-term behavior of the stochastic model conditioned on non-extinction~\citep{darroch_seneta1965,Darroch1967}. For finite state models, the quasi-stationary distribution corresponds to a normalized left eigenvector $\pi$ of the transition matrix of the Markov chain restricted to the transient states, i.e. the states where the infection persists. In discrete-time, if the disease dynamics follow the quasi-stationary distribution, then the eigenvalue $\lambda$ associated with this eigenvector corresponds to the probability of disease persistence over the next time step (respectively, a time interval of length one). Thus, when the stochastic model  follows the quasi-stationary distribution, the time to extinction is exponentially distributed with a mean time of extinction $1/(1-\lambda)$. \citet{GW} call $1/(1-\lambda)$ the intrinsic mean time to extinction. To understand the link between the stochastic and deterministic models, it is natural to ask: How does the intrinsic mean time to extinction increase as the population size gets larger? How is the quasi-stationary distribution related to the asymptotic dynamics of the deterministic model as the population size gets larger? More generally, how do these quantities depend on the parameters such as $R_0$? 

For continuous-time, stochastic SIS models,  there exists a dichotomy in the mean time to extinction when a fixed, positive fraction of the population is infected~\citep{weiss_dishon1971,barbour1975,kryscio_lefevre1989}. When $R_0>1$, this time increases exponentially with the population size $N$ in the limit of large population sizes. When $R_0<1$, these extinction times remain bounded in the limit of large population size. However, to the best of our knowledge, similar statements for the intrinsic mean extinction times have not been rigorously proven for these continuous-time SIR models. However, in a series of papers~\citep{naasell1996,naasell1999,naasell2001,naasell2002}, N{\aa}sell provided methods to approximate the intrinsic mean extinction times as well as the quasi-stationary distributions. His approximations support the existence of a similar dichotomy for intrinsic mean times to extinction. Moreover, they highlight a remarkable dichotomy about the qualitative behavior of the quasi-stationary distributions. When $R_0>1$, these distributions are well-approximated by a normal distribution centered near the endemic equilibrium. When $R_0<1$, the quasi-stationary distribution is best approximated by a discrete, geometric distribution. Despite these advances, mathematically rigorous results for discrete-time, stochastic SIS and SIR models are lacking.

In this paper, we introduce a new class of discrete-time SIS and SIR deterministic and stochastic models that have several desirable properties including (i) they are derived with individual-based rules and, consequently, preserve non-negativity of all populations, (ii) the deterministic models are the mean field model of the stochastic models, and (iii) the deterministic models converge to the classical continuous-time models in an appropriate limit. For these models, we  analyze the global dynamics of deterministic models and then use this analysis in conjunction with large deviation results from \citet{FS} to rigorously characterize the behavior of the intrinsic mean times to extinction and quasi-stationary distributions in the limit of large population sizes. Moreover, for some special cases, we derive explicit approximations for the quasi-stationary distributions and extinction times that apply for all population sizes.

Our paper is structured as follows. Section 2 introduces and analyzes the discrete-time, deterministic SIS  model. Section 3 presents mathematical and numerical findings on quasi-stationary distributions and intrinsic mean extinction times for the stochastic SIS model. Section 4 introduces the discrete-time SIR model and proves results about its global attractors. Section 5 presents mathematical and numerical findings on quasi-stationary distributions and intrinsic mean extinction times for the stochastic SIR model. Section 6 discusses our main findings and future challenges. The mathematical proofs are given in Sections 7 through 10. 

\section{The dynamics of a deterministic SIS model}
We begin with a discrete-time version of the classical susceptible-infected-susceptible (SIS) model where individuals are either susceptible or infected. Let $I_n$ denote the fraction of individuals that are infected at time step $n$, in which case, the fraction of susceptible individuals equals $1-I_n$. Individuals escape natural mortality with probability $e^{-\mu}$ while infected individuals escape recovery with probability $e^{-\gamma}$, where $\mu>0$ and $\gamma>0$. Susceptible individuals from the previous time step who have not died, escape infection with probability $e^{-\beta I_n}$ where $\beta>0$ is the contact and transmission rate. If the population size remains constant, then the disease dynamics are given by
\begin{equation}\label{DSIS}
    I_{n+1}=F(I_n):=e^{-\mu}(1-I_n)\big(1-e^{-\beta I_n}\big)+e^{-\mu-\gamma}I_n,\ 0\le I_n \le 1.
\end{equation}
This discrete-time formulation of the SIS model has several advantages. First, it is straightforward to verify that the dynamics of $I_n$ remain in the interval $[0,1]$ provided the initial value $I_0$ lies in this interval. Second, these dynamics, as described in the next section, correspond to the mean field dynamics of an individual-based model. Finally, if $\Delta t$ is the length of a time step, and $\beta=\tilde \beta \Delta t$, $\gamma=\tilde \gamma \Delta t$, $\mu=\tilde \mu \Delta t$, then
\[
I(t+\Delta t):= I_{n+1}= (1-I(t)) \tilde \beta \Delta t + I(t) -(\tilde \mu +\tilde \gamma )\Delta t I(t) +O(\Delta t^2) \mbox{ where }I(t):=I_n
\]
Hence, in the limit $\Delta t\to 0$, we get the classical SIS ordinary differential equation
\[
\frac{dI}{dt}=\lim_{\Delta t\to 0}\frac{I(t+\Delta t)-I(t)}{\Delta t}=(1-I)\tilde \beta I -(\tilde \mu +\tilde \gamma)I.
\]

To understand the dynamics of \eqref{DSIS}, we can linearize at the origin to obtain the per-capita growth rate of the infection at the disease free-equilibrium
\begin{equation}\label{BRN}
    \eig= \eig(\mu,\beta,\gamma):=\beta e^{-\mu}+ e^{-\mu-\gamma}.
\end{equation}
The basic reproduction, alternatively, is given by
\begin{equation}\label{BRN2}
R_0=\beta e^{-\mu}/(1-e^{-\mu-\gamma}).
\end{equation}
As $\eig>1$ if and only if $R_0>1$, we can use the basic reproductive number to characterize the global dynamics, as the following theorem shows. \vs

\begin{theorem}\label{SISDICH}
{\rm (i)}If $R_0\le 1$, then the origin is globally asymptotically stable.

{\rm (ii)}If $R_0>1$, then there is a unique positive fixed point in $(0,1]$ such that it is  globally asymptotically stable in $(0,1]$.

\end{theorem}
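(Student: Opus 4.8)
The plan is to treat \eqref{DSIS} purely as the dynamics of the scalar map $F$ on the forward-invariant interval $[0,1]$, and to extract three structural facts from which both parts follow. First I would record that $F(0)=0$, that $F(x)>0$ for $x\in(0,1]$, and that $F$ maps $[0,1]$ into $[0,e^{-\mu}]\subset[0,1)$ (bounding $(1-x)(1-e^{-\beta x})\le 1-x$ and $e^{-\mu-\gamma}x\le e^{-\mu}x$), which supplies a compact absorbing set. Second, differentiating twice gives
\[
F''(x)=-e^{-\mu}\beta e^{-\beta x}\big(2+(1-x)\beta\big)<0\qquad(x\in[0,1]),
\]
so $F$ is strictly concave, with $F'(0)=\alpha$. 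Third, and decisively, I would establish the uniform lower bound
\[
F'(x)=e^{-\mu}\big[(1-x)\beta e^{-\beta x}-(1-e^{-\beta x})\big]+e^{-\mu-\gamma}\ \ge\ -e^{-\mu}(1-e^{-\gamma})\ >\ -1,
\]
using $(1-x)\beta e^{-\beta x}\ge 0$ and $1-e^{-\beta x}\le 1$.

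For part (i), strict concavity together with $F(0)=0$ forces the graph of $F$ to lie strictly below its tangent line $y=\alpha x$ at the origin, so $F(x)<\alpha x\le x$ for every $x\in(0,1]$ when $\alpha\le 1$ (equivalently $R_0\le 1$). Hence any orbit in $(0,1]$ is strictly decreasing and bounded below, so it converges to a fixed point; since $0$ is the only fixed point, the origin is globally attracting, and Lyapunov stability is immediate from $F(x)<x$.

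For part (ii) ($\alpha>1$), the existence and uniqueness of the interior fixed point $I^*$ come from applying the intermediate value theorem and concavity to $H:=F-\mathrm{id}$: one has $H(0)=0$, $H'(0)=\alpha-1>0$, and $H(1)=e^{-\mu-\gamma}-1<0$, while strict concavity of $H$ permits at most one positive zero; this also yields $F(x)>x$ on $(0,I^*)$ and $F(x)<x$ on $(I^*,1]$. The hard part is global (as opposed to local) stability, because $F$ is unimodal rather than monotone and could a priori period-double or behave chaotically — the concave logistic map is exactly such a cautionary example, so concavity alone is insufficient. Here the bound $F'>-1$ is the crucial extra ingredient: it makes $\psi(x):=x+F(x)$ strictly increasing, which rules out $2$-cycles, since a genuine $2$-cycle $\{p,q\}$ would give $\psi(p)=p+q=\psi(q)$ with $p\ne q$, contradicting injectivity of $\psi$. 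I would then invoke the classical theorem on interval maps (Coppel), that a continuous self-map of a compact interval has every orbit converging to a fixed point precisely when it possesses no $2$-cycle. Consequently every orbit in $[0,1]$ converges to a fixed point; an orbit starting in $(0,1]$ cannot converge to $0$, since $F(x)>x$ on $(0,I^*)$ forces its tail to be eventually increasing, so it must converge to $I^*$. Local asymptotic stability follows from $|F'(I^*)|<1$ (the estimate $F'(I^*)<1$ being $H'(I^*)<0$ at the second zero of the concave $H$, and $F'(I^*)>-1$ from the uniform bound), completing global asymptotic stability in $(0,1]$.

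I expect the main obstacle to be precisely this last step: converting the local picture into a global one for a non-monotone map. The entire difficulty is concentrated in verifying the derivative bound $F'(x)>-1$ uniformly on $[0,1]$, which encodes the absence of period-doubling; once that elementary estimate is in hand, the $2$-cycle exclusion and the interval-map dichotomy do the rest, and the remaining computations (concavity, the fixed-point count, and the sign of $H$) are routine.
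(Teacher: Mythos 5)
Your proof is correct, and while it rests on the same two analytic estimates as the paper's proof---strict concavity of $F$ (your $F''<0$ computation is, up to the factor $e^{-\mu}$, exactly the paper's computation $g''(I)=-\beta e^{-\beta I}(2+\beta(1-I))<0$) and the uniform bound $F'>-1$ (which the paper also derives, in the form $F'(I)\ge e^{-\mu}(-1+e^{-\beta I}+e^{-\gamma})>-1$)---the route from those estimates to global stability is genuinely different. The paper proves $F(I)<\alpha I$ via the auxiliary function $g$ and then, for part (i), compares with the linear map $L(I)=\alpha I$ by induction when $\alpha<1$, treating $\alpha=1$ separately by cobwebbing; for part (ii) it splits into two cases according to whether the fixed point $I^*$ lies before or after the critical point $I_c^*$ of $F$, and argues graphically (``Feigenbaum's method''): monotone cobweb convergence in case (iia), and in case (iib) an elaborate geometric construction with a line of slope $-1$ through $(I^*,F(I^*))$ showing $I_0<F^2(I_0)<I^*$, followed by a contradiction argument for the even and odd subsequences. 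You instead get $F(x)<\alpha x$ for free from the tangent-line property of strictly concave functions (which also unifies $\alpha<1$ and $\alpha=1$), and you replace the entire case analysis of part (ii) by the observation that $F'>-1$ makes $\psi(x)=x+F(x)$ strictly increasing, hence $F$ has no $2$-cycle, and then invoke Coppel's theorem that a continuous self-map of a compact interval has all orbits converging to fixed points precisely when it has no $2$-cycle; identifying the limit as $I^*$ (an orbit in $(0,1]$ cannot converge to $0$ since $F>\mathrm{id}$ on $(0,I^*)$) finishes the argument. What each approach buys: yours is shorter, avoids figure-based reasoning, handles both subcases of (ii) at once, and correctly isolates $F'>-1$ as the structural reason period-doubling cannot occur (your logistic-map caveat is apt, since that map violates this bound); the paper's argument is self-contained---it needs no external theorem---and its cobweb analysis additionally exhibits the qualitative character of the transients (monotone approach when $I^*\le I_c^*$, damped oscillation when $I^*>I_c^*$), information the Coppel route suppresses.
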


\newcommand{\s}{\mathcal{S}}

\section{Metastability and extinction in a stochastic SIS model}
\begin{figure}
\includegraphics[width=0.45\textwidth]{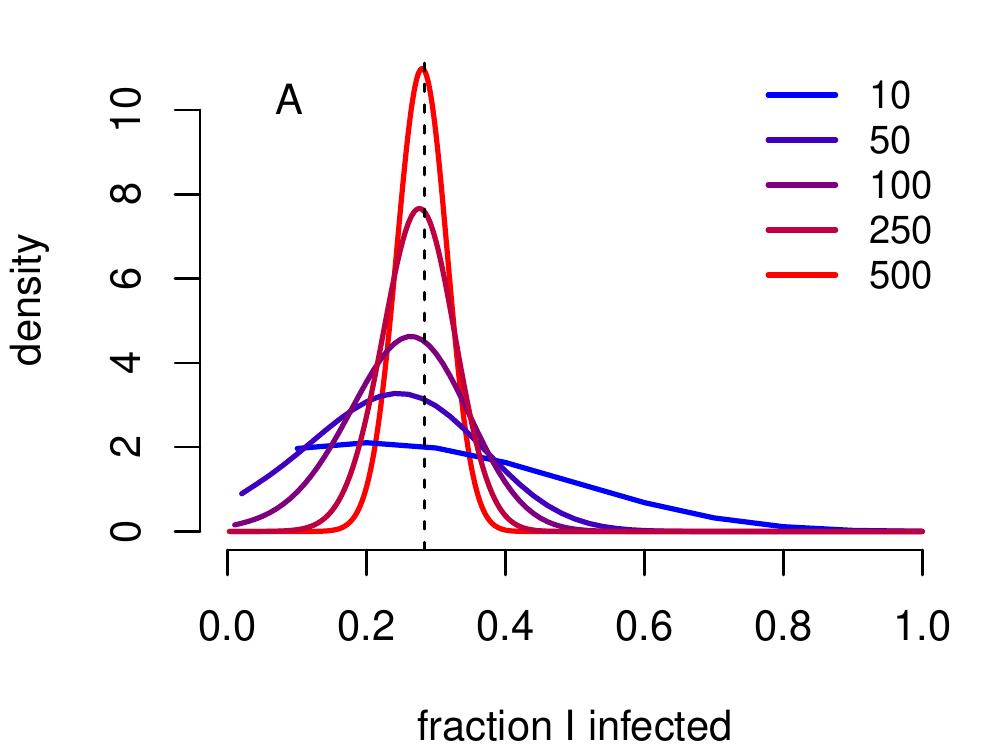}
\includegraphics[width=0.45\textwidth]{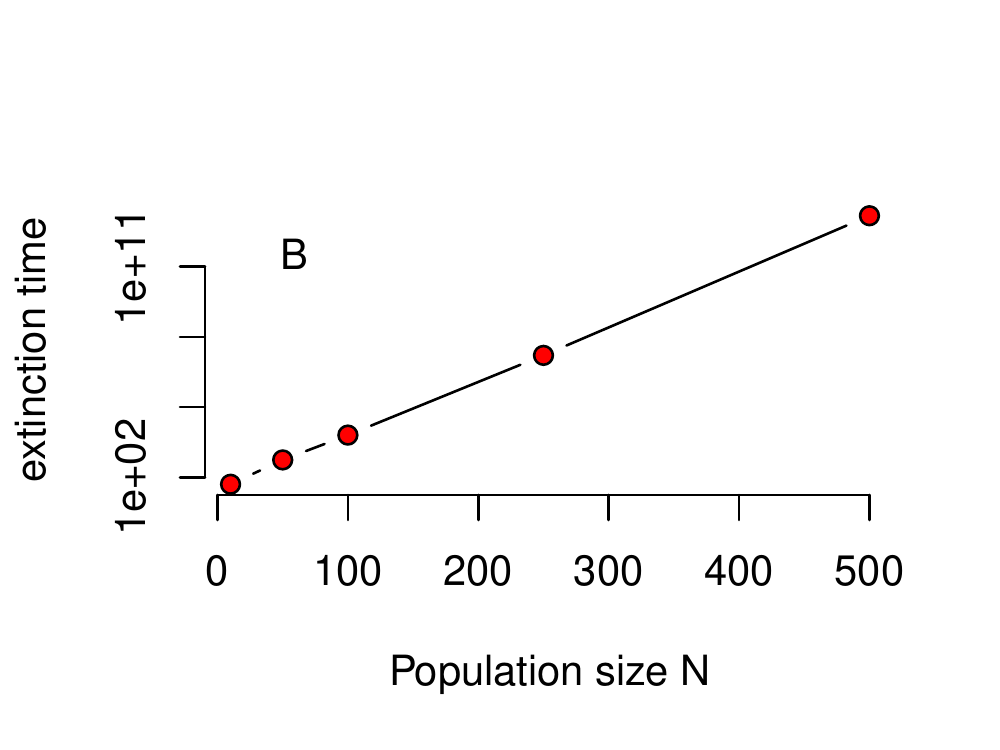}\\
\includegraphics[width=0.45\textwidth]{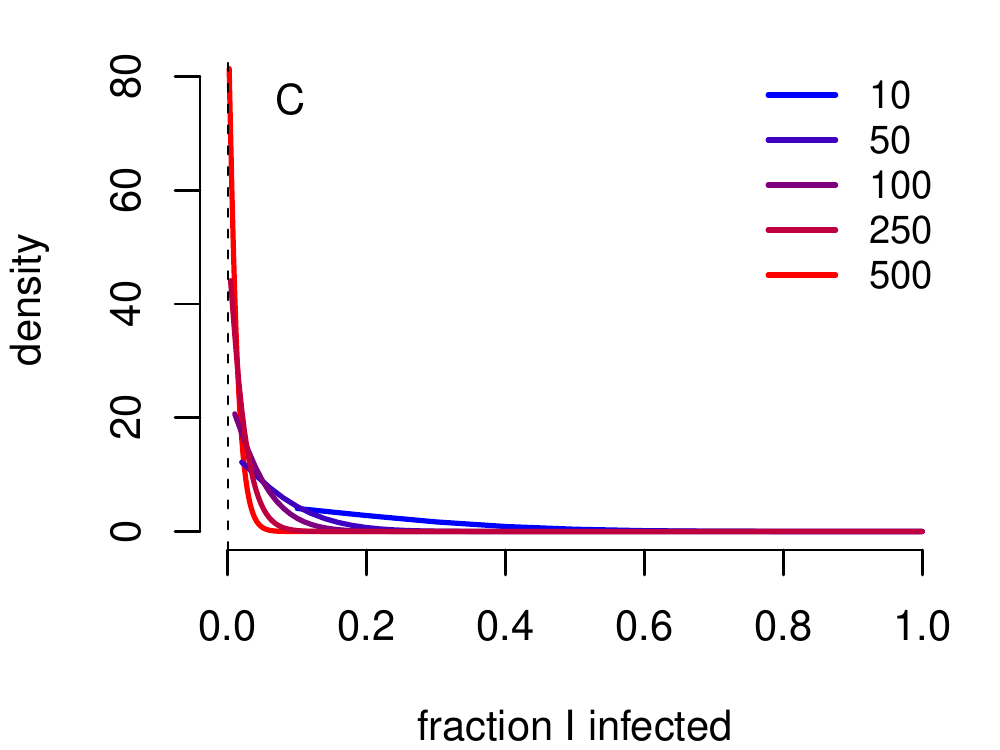}
\includegraphics[width=0.45\textwidth]{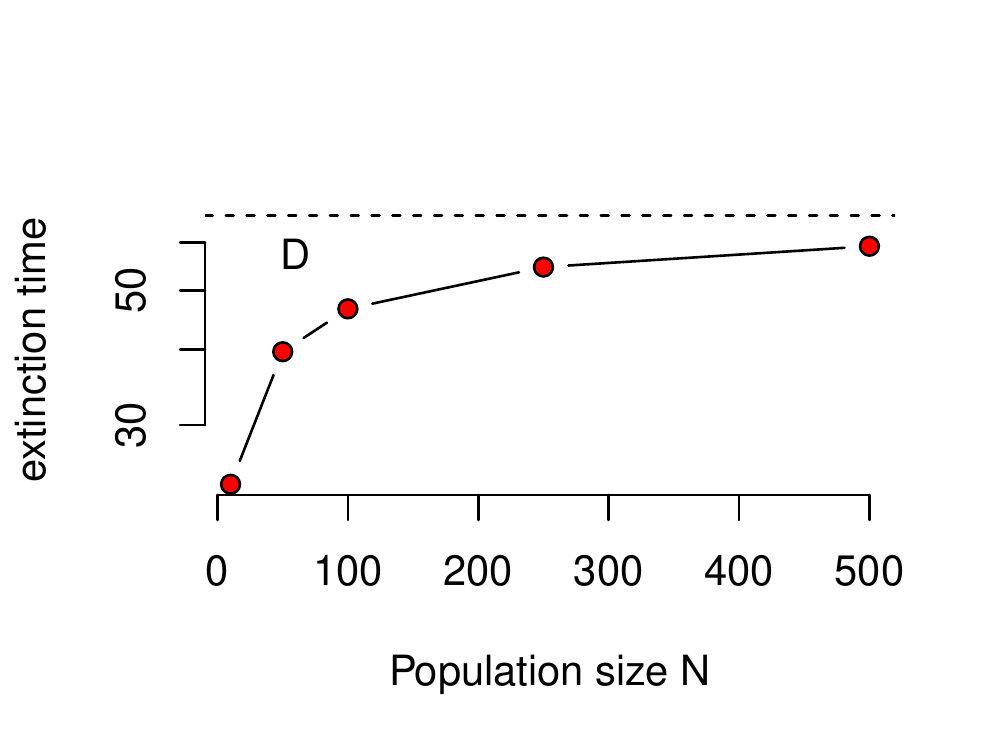} 
\caption{Quasi-stationary distributions and intrinsic mean extinction times for the stochastic SIS model for $R_0>1$ (A,B) and $R_0<1$ (C,D). In (A) and (C), the quasi-stationary distributions approach a Dirac distribution at the equilibrium density (dashed line). In (B), the intrinsic mean time to extinction increases exponentially with population size for $R_0>1$. In (D), the intrinsic mean time to extinction saturates at $1/(1-\eig)$ (dashed line) as population size increases. Parameter values: $\gamma=0.1$, $\mu=0.01$, and $\beta=0.15$ for (A,B) and $\beta=0.09$ for (C,D).}\label{fig:SSIS}
\end{figure}

For the individual-based stochastic model, we require the additional parameter of the total population size $N$. Given this population size, the state space corresponds to the possible fractions of infected individuals in the population
\[
\s=\left\{0,\frac{1}{N},\frac{2}{N},\dots,\frac{N-1}{N},1\right\}.
\]
Let $I_n\in \s$ be the fraction of individuals infected at time step $n$. To determine the fraction infected in the next time step, we assume that each infected individual remains infected with probability $e^{-\mu-\gamma}$ independent of each other and each susceptible individual lives and becomes infected with probability $e^{-\mu}(1-e^{-\beta I_n})$ independent of each other. Hence,
\begin{equation}\label{SSIS}
\begin{aligned}
I_{n+1}&=\frac{X_{n+1}+Y_{n+1}}{N} \mbox{ where}\\
X_{n+1}&\sim \rm{Binom}(NI_n,e^{-\mu-\gamma}) \mbox{ and}\\
Y_{n+1}&\sim\rm{Binom}\Big(N(1-I_n),e^{-\mu}\big(1-e^{-\beta I_n}\big)\Big).
\end{aligned}
\end{equation}
\renewcommand{\P}{\mathbb{P}}
\newcommand{\E}{\mathbb{E}}
For all $1\le i,j \le N$, let $Q_{ij}=\P[X_{n+1}=j/N|X_n=i/N]$ be the transition probabilities restricted to the transient states $\s\setminus\{0\}$ and $Q=(Q_{ij})$ be the associated $N\times N$ matrix. Unlike the deterministic model, the disease goes extinct in finite time with probability one for the stochastic model. The following proposition follows from standard results in stochastic processes~\citep[see e.g.][Theorem 3.20]{harier-18}.

\begin{proposition}\label{prop1} Assume that $\mu+\gamma$ and $\beta$ are positive.  With probability one, $I_n=0$ for some $n\ge 1.$
\end{proposition}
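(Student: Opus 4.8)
The plan is to show that the single state $I=0$ (extinction) is absorbing and is accessible in one step from every other state of this finite chain, and then to invoke the standard finite-state Markov chain argument to conclude almost-sure absorption; this is exactly the content of the cited Theorem 3.20 of \citet{harier-18}. Since we need only accessibility of the one absorbing state, and not irreducibility of the transient class, the argument is short and self-contained.

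First I would verify that $0$ is absorbing. If $I_n=0$ then $NI_n=0$, so $X_{n+1}\sim \mathrm{Binom}(0,e^{-\mu-\gamma})$ is identically $0$; and since $1-e^{-\beta\cdot 0}=0$ the success probability of $Y_{n+1}$ vanishes, forcing $Y_{n+1}\equiv 0$. Hence $I_{n+1}=0$ with probability one, so once the chain hits $0$ it remains there.

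The crux is a uniform lower bound on the one-step probability of extinction from any transient state. For $I_n=i/N$ with $1\le i\le N$, independence of the infected and susceptible transitions gives
\[
\P[I_{n+1}=0\mid I_n=i/N]=\P[X_{n+1}=0]\,\P[Y_{n+1}=0]=\left(1-e^{-\mu-\gamma}\right)^{i}\left(1-e^{-\mu}\bigl(1-e^{-\beta i/N}\bigr)\right)^{N-i}.
\]
Since $\mu+\gamma>0$ we have $1-e^{-\mu-\gamma}>0$, and since $e^{-\mu}\bigl(1-e^{-\beta i/N}\bigr)<1$ the second factor is also strictly positive; both factors lie in $(0,1]$, so this probability equals some $p_i>0$. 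Because $\s$ is finite, $\delta:=\min_{1\le i\le N}p_i>0$.

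Finally, I would conclude by a geometric-trials argument using the Markov property: from any starting state, conditioning step by step and applying the uniform bound $\delta$ yields $\P[I_1\neq 0,\dots,I_n\neq 0]\le(1-\delta)^n$, which tends to $0$ as $n\to\infty$. Hence $I_n=0$ for some finite $n$ with probability one. I do not expect any serious obstacle here: the only point requiring care is confirming that the one-step absorption probability is strictly positive (this is where $\mu+\gamma>0$ enters) and uniformly bounded below (this is where finiteness of $\s$ enters); everything else is the textbook absorption result. Note in particular that $\beta>0$ is not needed for this direction and is presumably stated only to fix the standing hypotheses of the model.
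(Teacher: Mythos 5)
Your proof is correct and takes essentially the same route as the paper, which offers no detailed argument but simply invokes the standard finite-state absorption theorem (Theorem 3.20 of the cited reference): you verify its hypotheses for this chain---that $0$ is absorbing and that the one-step absorption probability $\left(1-e^{-\mu-\gamma}\right)^{i}\left(1-e^{-\mu}\bigl(1-e^{-\beta i/N}\bigr)\right)^{N-i}$ is strictly positive and, by finiteness of $\s$, uniformly bounded below---and then reproduce the geometric-trials conclusion. Your side remark that $\beta>0$ is not actually needed for this proposition is also accurate; it is only required elsewhere (e.g., for positivity of the restricted transition matrix $Q$ and the quasi-stationary analysis).
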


Even though extinction is inevitable, it may be preceded by long term transients. To characterize these transients, we use quasi-stationary distributions introduced by \citet{darroch_seneta1965}. As $Q$ is a sub-stochastic, positive matrix, there exists a dominant eigenvalue $\lambda\in (0,1)$ and associated dominant eigenvector  $\pi=(\pi_1,\dots,\pi_N)$ (depending on $N$) such that $\sum_i \pi_i=1$, $\pi_i>0$ for all $i$, and $\pi Q= \lambda \pi.$ $\pi$ is the \emph{quasi-stationary distribution} which satisfies \citep{darroch_seneta1965}
\[
\lim_{n\to\infty}\P[I_n=j/N|I_n>0]=\pi_j
\]
i.e. the probability of having $j$ individuals in the long-term given the disease hasn't gone extinct equals $\pi_j$. Furthermore,
\[
\sum_{i=1}^N \P[I_{n+1}>0|I_n=i/N]\pi_i=\lambda
\]
i.e. $\lambda$ is the probability the disease persists to the next time step given the process is following the quasi-stationary distribution. Hence, the mean time to extinction, when following the quasi-stationary distribution, is $\frac{1}{1-\lambda}$, what \citet{GW} call the \emph{intrinsic mean time to extinction}.

Our main result for the stochastic SIS model concerns the behavior of the quasi-stationary distribution and the intrinsic mean time to extinction for large population size $N$.\vs

\begin{theorem}\label{SISSICH} Assume $\mu+\gamma>0,\beta>0$. For each $N\ge 1$, let $\pi^N$ be the quasi-stationary distribution and $\lambda^N$ the corresponding eigenvalue. Let $\eig=\beta e^{-\mu}+e^{-\mu-\gamma}.$ Then
\begin{enumerate}
  \item[(i)]If $\eig\le 1$ (equivalently $R_0\le 1 $), then $\lambda^N\le \eig$ for all $N\ge 1$ and $\lim_{N\to\infty}\pi^N=\delta_0$ where $\delta_0$ is a Dirac measure at $0$ and convergence is in the weak* topology on probability measures on $[0,1]$. 
  \item[(ii)] If $\eig>1$ (equivalently $R_0>1$), then $\lim_{N\to\infty}\pi^N=\delta_{I^*}$ where $\delta_{I^*}$ is the Dirac measure at the unique positive fixed point $I^*$ of equation~\eqref{DSIS} and\\ $\limsup_{N\to\infty}\frac{1}{N}\log (1-\lambda_N)<0.$
\end{enumerate}
\end{theorem}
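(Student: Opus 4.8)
The plan is to prove the explicit bound $\lambda^N\le\eig$ in (i) by hand and to extract the three concentration/decay statements from the large-deviation machinery of \citet{FS}, fed by the deterministic global dynamics of Theorem~\ref{SISDICH}.

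First I would record the deterministic inequality $F(x)\le\eig\,x$ on $(0,1]$. Writing
\[
\frac{F(x)}{x}=e^{-\mu}(1-x)\,\frac{1-e^{-\beta x}}{x}+e^{-\mu-\gamma},
\]
and noting that $x\mapsto(1-e^{-\beta x})/x$ and $x\mapsto 1-x$ are positive and decreasing on $(0,1]$, the ratio $F(x)/x$ is decreasing with $\sup_{(0,1]}F(x)/x=\lim_{x\to0^+}F(x)/x=\eig$. With this in hand, $\lambda^N\le\eig$ follows by testing the left-eigenvector identity $\pi^N Q=\lambda^N\pi^N$ against the linear weight $i\mapsto i/N$; since the contribution of the absorbing state vanishes,
\[
\lambda^N\sum_i\pi^N_i\,\frac{i}{N}=\sum_i\pi^N_i\,F(i/N)\le\eig\sum_i\pi^N_i\,\frac iN,
\]
and dividing by the strictly positive mean $\sum_i\pi^N_i\,(i/N)$ gives $\lambda^N\le\eig$, hence $1/(1-\lambda^N)\le 1/(1-\eig)$ in the subcritical regime.

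Next I would view \eqref{SSIS} as a family of order-$1/N$ random perturbations of the map $F$: conditional on $I_n=x$, the law of $I_{n+1}$ is the rescaled convolution of two independent binomials whose means sum to $NF(x)$, so by Cram\'er's theorem it satisfies a large-deviation principle at speed $N$ with a rate function that vanishes exactly at $y=F(x)$. This is the input required by \citet{FS}, whose results identify every weak* limit point of the $\pi^N$ with a measure carried by the attractors of $F$ and relate the exponential decay of $1-\lambda^N$ to the quasipotential of reaching the absorbing state $0$. Feeding in Theorem~\ref{SISDICH} closes both cases: when $\eig\le1$ the global attractor of $F$ in $[0,1]$ is $\{0\}$, forcing $\pi^N\to\delta_0$; when $\eig>1$ the unique attractor in $(0,1]$ is $\{I^*\}$, forcing $\pi^N\to\delta_{I^*}$. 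For the exponential bound in (ii), I would use $1-\lambda^N=\sum_i\pi^N_i\,\P[I_{n+1}=0\mid I_n=i/N]$ together with $F'(0)=\eig>1$, which makes $0$ a repeller; the least-cost large-deviation path from $I^*$ to $0$ then carries a strictly positive quasipotential $V$, and the FS decay-rate formula gives $\frac1N\log(1-\lambda^N)\to-V<0$.

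I expect the main obstacle to be verifying the FS hypotheses for this kernel---a joint large-deviation principle for the two-binomial convolution that is uniform in the state, plus irreducibility and accessibility of the chain on $\s\setminus\{0\}$---and, separately, the degenerate situation in (i) in which the deterministic attractor $\{0\}$ coincides with the absorbing set and so lies outside the cleanest FS hypotheses. In that case I would argue $\pi^N\to\delta_0$ directly: the strict inequality $F(x)/x\le\eig-c_\delta$ on $[\delta,1]$ upgrades the eigenvalue computation to $c_\delta\,\delta\,\pi^N([\delta,1])\le(\eig-\lambda^N)\sum_i\pi^N_i\,(i/N)\le\eig-\lambda^N$, so it suffices to show $\eig-\lambda^N\to0$, which I would obtain by comparing the rare-infection dynamics near $0$ with a subcritical Galton--Watson process of offspring mean $\eig$, whose Yaglom eigenvalue is exactly $\eig$. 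Establishing $V>0$ in (ii) and this branching comparison are the two places where the real work lies.
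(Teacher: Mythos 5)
Your architecture is the same as the paper's---a uniform-in-$x$ exponential (Chernoff) bound for the binomial kernel, the machinery of \citet{FS}, and the deterministic dichotomy of Theorem~\ref{SISDICH}---and your proof of $\lambda^N\le\eig$ is exactly the paper's (the paper derives $F(x)<\eig x$ via concavity of $(1-x)(1-e^{-\beta x})-\beta x$ rather than monotonicity of $F(x)/x$; the difference is immaterial). The genuine gap is in part (ii). The claim that the FS results ``identify every weak* limit point of the $\pi^N$ with a measure carried by the attractors of $F$'' is stronger than what FS provides: their Proposition 3.11 yields only that weak* limit points are \emph{invariant} measures of $F$, and when $\eig>1$ the ergodic invariant measures on $[0,1]$ are $\delta_0$ \emph{and} $\delta_{I^*}$; invariance alone cannot exclude $\delta_0$, which is invariant even though $0$ is a repeller. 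The missing, model-specific step is the verification of the hypothesis of Lemma 3.9(b') of \citet{FS}: choose $\eta>0$ so small that the one-step extinction probability from any state $x\in(0,\eta]$, namely $(1-e^{-\mu-\gamma})^{Nx}\bigl(1-e^{-\mu}(1-e^{-\beta x})\bigr)^{N(1-x)}$, is at least $\exp(-2N\rho(\delta)/3)$ and hence dominates the noise bound $\beta_\delta(N)\le\exp(-N\rho(\delta))$; it is this extinction-versus-noise comparison that forces $\lim_{N\to\infty}\pi^N([0,\eta])=0$, and only then does invariance pin the limit to $\delta_{I^*}$. Your list of obstacles (uniform LDP, irreducibility, accessibility) does not contain this comparison. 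Relatedly, the ``FS decay-rate formula'' $\frac{1}{N}\log(1-\lambda^N)\to-V$ does not exist in FS: their Lemma 3.9(b) gives only the upper bound $1-\lambda^N\le\beta_\delta(N)$ (which, combined with the uniform Chernoff bound, is all the theorem needs), and the paper's discussion explicitly lists the exact exponential rate as an open problem.

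Your treatment of part (i) is also backwards in one respect. The ``degenerate situation'' you worry about---the attractor coinciding with the absorbing state---is precisely the case Theorem 3.12 of \citet{FS} is designed for: it requires only the standing hypothesis $\beta_\delta(N)\to 0$ and directly gives $\pi^N\to\delta_0$, which is the paper's one-line argument. Your fallback, by contrast, needs $\lambda^N\to\eig$, which you propose to obtain from a Yaglom-type branching comparison; but this is exactly the paper's stated \emph{conjecture} (equivalently, that the intrinsic mean extinction time tends to $1/(1-\eig)$), left open because density dependence makes the branching comparison one-sided: stochastic domination by the Galton--Watson process with offspring mean $\eig$ yields the upper bound $\lambda^N\le\eig$ you already have, not the matching lower bound. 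So the fallback replaces a solved step with an unsolved problem, while your primary route (which is the paper's) already suffices.
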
\vs

The first assertion of Theorem~\ref{SISSICH} implies that if $\eig<1$ and the population size is large, then any long-term transients mostly involve low frequencies of infected individuals and the mean time to extinction after these transients is less than $\frac{1}{1-\eig}.$ We conjecture that in the limit of large population size, $N\to\infty$, the intrinsic mean time to extinction equals $\frac{1}{1-\eig}.$ The second assertion of Theorem~\ref{SISSICH} implies that if $\eig>1$ and the population size is large, then the long-term transients fluctuate around the equilibrium frequency of the deterministic model and the mean time to extinction following these transients increases exponentially with population size,  i.e. there exist $c_1,c_2>0$ such that $\frac{1}{1-\lambda^N}\ge c_1 e^{c_2N}$ for all $N\ge 1.$ Figure~\ref{fig:SSIS} illustrates these conclusions numerically.

\begin{figure}
\includegraphics[width=0.45\textwidth]{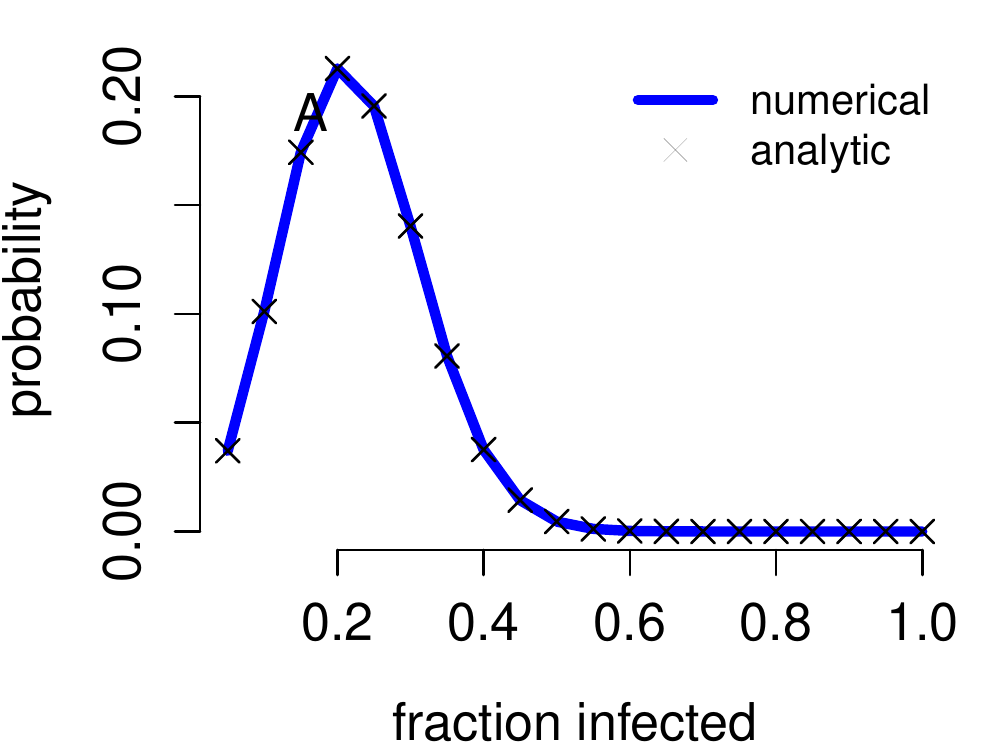}
\includegraphics[width=0.45\textwidth]{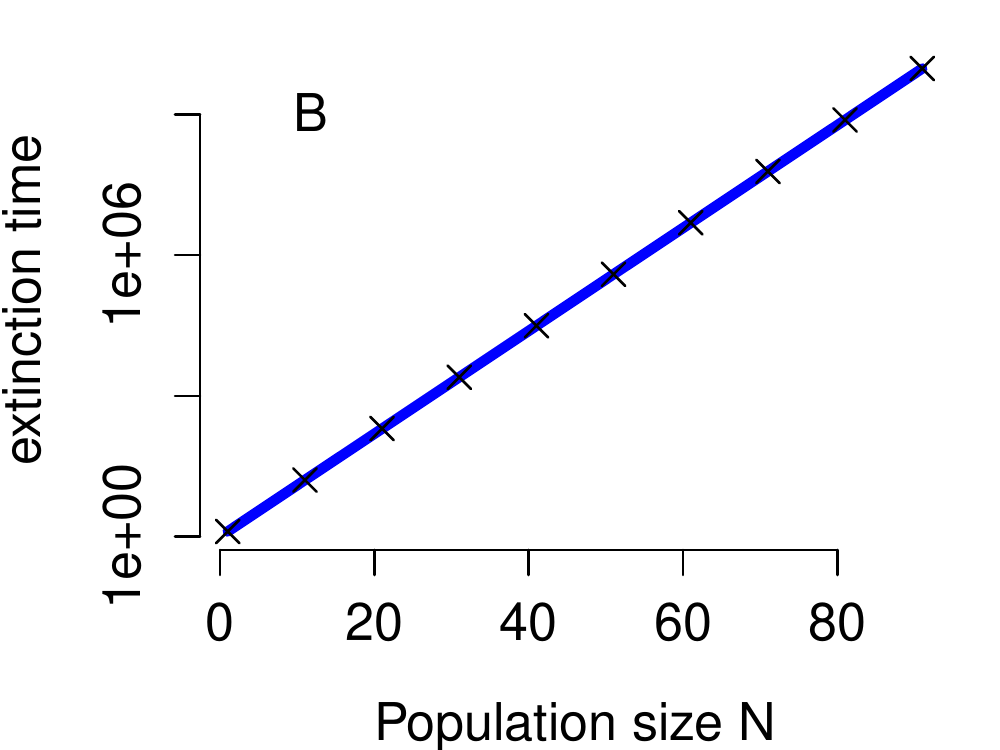}
\caption{Quasi-stationary distributions and intrinsic mean extinction times for the stochastic SIS model for high contact rates and low recovery rates. In (A), the numerically computed quasi-stationary distribution (x marks) and the analytical approximation (solid blue curve) $\pi_i ={N \choose i} e^{-\mu i}(1-e^{-\mu})^{N-i}/(1-e^{-\mu})^N$. In (B), the numerically computed intrinsic mean time to extinction  (x marks) and the analytical approximation (solid blue curve) $1/(1-(1-e^{-\mu})^N)$. Parameter values: $\gamma=0$, $\mu=1.5$, $\beta=100$, and $N=20$ for (A).}\label{fig:SSIS2}
\end{figure}

Given that Theorem~\ref{SISSICH} describes the effect on increasing population size on the intrinsic mean time to extinction for a fixed value of $\eig$, it is natural to ask what effect increasing $\eig$ has on these extinction times for a fixed population size. In general, this is a challenging question. However, we can answer this question for two special cases. First, we consider the case of low recovery rates $\gamma=0$ and very high $\beta\gg 1$ contact rates. In the limit of $\beta\to\infty$, the update rule for $I_n$ for $I_n>0$ is approximately $I_{n+1}\sim \rm{Binom}(N,e^{-\mu})$. Namely, provided there is at least one infected individual at time step $n$, all individuals that have not died get infected. In this case, the quasi-stationary distribution is approximately $\pi_i ={N \choose i} e^{-\mu i}(1-e^{-\mu})^{N-i}/\lambda^N$ for $i=1,2,\dots,N$ with the persistence eigenvalue $\lambda^N=(1-e^{-\mu})^N$. In particular, the mean intrinsic extinction time is bounded in the limit of $\eig\to \infty$ due to $\beta\to\infty$. The accuracy of this approximation is illustrated in Figure~\ref{fig:SSIS2}. Second, in the limit of no recovery and no mortality (i.e. $\mu=\gamma=0$), the disease (not surprisingly!) never goes extinct whenever $I_0>0.$ Indeed, in this case, $I_n\to 1$ as $n\to\infty$ with probability one provided $\beta>0$ and $I_0>0.$ These two special cases highlight that the effect of $\eig$ on the intrinsic mean time to extinction depends in a subtle way as $\eig$ increases.

\section{The dynamics of a deterministic SIR model}

As discrete-time counterpart to the classical susceptible-infected-recovered (SIR) model, we assume all individuals escape natural mortality with probability $e^{-\mu}$, infected individuals escape recovery with probability $e^{-\gamma}$, and susceptible individuals escape infection with probability $e^{-\beta I}$ where $I$ is the frequency of infected individuals and $\beta>0$ is the contact and transmission rate. If the population size is constant, then the discrete-time dynamics are given by
\begin{equation}\label{DSIR}
    \left\{\begin{array}{l}
                 S_{n+1}=1-e^{-\mu}+S_n e^{-\mu-\beta I_n}, \\
                 \noalign{\medskip}
                 I_{n+1}=e^{-\mu}S_n\big(1-e^{-\beta I_n}\big)+e^{-\mu-\gamma}I_n. \\
 \end{array}\right.
\end{equation}
Like the discrete-time SIS model, this discrete-time formulation of the SIR model has several advantages. First, by adding the two equations of (\ref{DSIR}) together, we obtain that the trajectories of (\ref{DSIR}) remain in the domain
$$X:=\{(S,I): S\ge 0,\ I\ge 0,\ S+I\le 1\}$$  provided the initial value $(S_0,I_0)$ lies in this domain. Furthermore, if we define $\partial{X_0}:=\{(S,0): 0\le S\le 1\}$ and $X_0:=X\setminus \partial{X_0}$, then $X_0$ and $\partial{X_0}$ are positively invariant. Second, these dynamics, as described in the next section, correspond to the mean field dynamics of an individual-based model. Finally, if $\Delta t$ is the length of a time step, and $\beta=\tilde \beta \Delta t$, $\gamma=\tilde \gamma \Delta t$, $\mu=\tilde \mu \Delta t$, then
\[
\begin{aligned}
S(t+\Delta t):=& S_{n+1}= \tilde \mu \Delta t + S(t) -(\tilde \mu +\tilde \beta I(t))\Delta t S(t)  +O(\Delta t^2) \mbox{ where }S(t):=S_n,\\
I(t+\Delta t):=& I_{n+1}= S(t)I(t)\tilde \beta \Delta t +I(t)-( \tilde \mu +\tilde \gamma)I(t)\Delta t+O(\Delta t^2) \mbox{ where }I(t):=I_n.\\
\end{aligned}
\]
Hence, in the limit $\Delta t\to 0$, we get the classical SIR system of ordinary differential equations
\[
\begin{aligned}
\frac{dS}{dt}&=\lim_{\Delta t\to 0}\frac{S(t+\Delta t)-S(t)}{\Delta t}=\tilde \mu -(\tilde \mu +\tilde \beta I)S\\
\frac{dI}{dt}&=\lim_{\Delta t\to 0}\frac{I(t+\Delta t)-I(t)}{\Delta t}=\tilde \beta  IS -(\tilde \mu +\tilde \gamma)I.
\end{aligned}
\]

For our discrete-time SIR model, the disease-free fixed point is $(1,0)$. At this fixed point, the per-capita growth rate of the disease still equals $\eig=\beta e^{-\mu}+e^{-\mu-\gamma}$ and the reproductive number still equals $R_0=\beta e^{-\mu}/(1-e^{-\mu-\gamma}.$ We will show that if $R_0>1$, the disease persists and if $R_0\le 1$, the disease-free equilibrium is globally stable. Furthermore, we will show when the recovery rate $\gamma$ is sufficiently small, there is a globally stable endemic equilibrium. To state these results precisely, we define the parameter space as $P:=\{\lambda:=(\mu,\beta,\gamma):\mu>0,\beta>0,\gamma\ge 0\}$. Let $C_P^0:=\{\lambda=(\mu,\beta,0) \in P:\eig(\lambda)>1\}$ be the parameters corresponding to no recovery ($\gamma=0$) and $\eig>1$. Finally, define
\[
C_P:=\{\lambda \in P:\eig(\lambda)>1,\ \mbox{\eqref{DSIR} admits a globally stable hyperbolic fixed point in}\, X_0 \}.
\]
Using these definitions, we have the following theorem.
\begin{theorem}\label{SIR}
{\rm (i)} If $\eig\le 1$, then the disease free fixed point $(1,0)$ is globally asymptotically stable.

{\rm (ii)} If $\eig>1$, then $F: X_0\rightarrow X_0$ admits a global and compact attractor contained in the interior of $X_0$.

{\rm (iii)} $C_P\supset C_P^0$ is a non-empty open subset in $P$.
\end{theorem}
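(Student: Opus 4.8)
\section*{Proof proposal}

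The plan is to treat the three parts separately, using the elementary bound $1-e^{-\beta I}\le \beta I$ as the workhorse for the growth of the infected class, and then exploiting the skew-product structure that emerges when $\gamma=0$.

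For (i), I would first note that $S_{n+1}\ge 1-e^{-\mu}>0$, so $0<S_n\le 1$ for $n\ge 1$, and then estimate the infected class directly: since $1-e^{-\beta I_n}\le \beta I_n$ and $S_n\le 1$, the second equation of \eqref{DSIR} gives $I_{n+1}\le(\beta e^{-\mu}+e^{-\mu-\gamma})I_n=\eig I_n$. When $\eig<1$ this yields $I_n\to 0$ geometrically. For the borderline case $\eig=1$, the strict inequality $1-e^{-\beta I_n}<\beta I_n$ for $I_n>0$ shows $I_n$ is strictly decreasing while positive, hence convergent to some $I^\ast\ge 0$; assuming $I^\ast>0$ and passing to the limit (using that the $S$-recursion is then an asymptotically autonomous contraction, so $S_n\to S^\ast$) forces $e^{-\mu}S^\ast(1-e^{-\beta I^\ast})=(1-e^{-\mu-\gamma})I^\ast$, which together with $S^\ast\le 1$ and $1-e^{-\beta I^\ast}<\beta I^\ast$ contradicts $\eig\le 1$. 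Thus $I_n\to 0$, and feeding this back into the $S$-recursion gives $S_n\to 1$; global asymptotic stability of $(1,0)$ then follows, with Lyapunov stability supplied by the estimate $I_{n+1}\le \eig I_n$ together with the $S$-contraction.

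For (ii), dissipativity is automatic because $X$ is compact and forward invariant, so a global compact attractor exists; the content is that it sits in the interior. I would invoke the theory of uniform persistence for discrete semiflows. On the invariant boundary $\partial X_0$ the dynamics reduce to the scalar contraction $S_{n+1}=1-e^{-\mu}+e^{-\mu}S_n$, whose global attractor is the single fixed point $(1,0)$, so the boundary flow is trivially acyclic. The Jacobian at $(1,0)$ is upper triangular with eigenvalues $e^{-\mu}<1$ and $\eig>1$, the expanding eigenvalue pointing transverse to the boundary; equivalently, with the average Lyapunov function $\rho(S,I)=I$ one computes $\rho(F(S,I))/\rho(S,I)\to\eig>1$ as $(S,I)\to(1,0)$ in $X_0$. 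This makes $(1,0)$ a uniform weak repeller whose stable set meets $X$ only in $\partial X_0$, and a standard acyclicity theorem then yields uniform persistence, i.e.\ a compact attractor bounded away from $\partial X_0$. The main obstacle here is packaging the repeller estimate into the hypotheses of a persistence theorem cleanly, since $\rho$ degenerates exactly on the boundary.

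For (iii), the key observation is that when $\gamma=0$ the total $T_n:=S_n+I_n$ satisfies the \emph{exact} linear recursion $T_{n+1}=1-e^{-\mu}+e^{-\mu}T_n$, so $T_n\to 1$ and the line $\{S+I=1\}$ is invariant; on it the $I$-equation becomes precisely the SIS map $F$ of \eqref{DSIS}. In the skew-product coordinates $(I,T)$ the system is triangular, the $\omega$-limit set of any orbit lies in $\{T=1\}$ and is internally chain recurrent for $F$, and since Theorem~\ref{SISDICH}(ii) gives a globally stable fixed point $I^\ast$ for $F$, every orbit converges to $(1-I^\ast,I^\ast)$. A direct computation shows $F$ is strictly increasing and strictly concave on $[0,1]$ with $F'(0)=\eig>1$, so its graph crosses the diagonal transversally from above at $I^\ast$, giving $0<F'(I^\ast)<1$; in the above coordinates the Jacobian at $(1-I^\ast,I^\ast)$ is triangular with eigenvalues $e^{-\mu}$ and $F'(I^\ast)$, both in $(0,1)$, so the fixed point is hyperbolic and globally stable. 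This shows $C_P^0\subseteq C_P$, and $C_P^0$ is nonempty since $\eig=(\beta+1)e^{-\mu}>1$ for $\beta$ large. For openness I would combine persistence of a hyperbolic fixed point under parameter perturbation (implicit function theorem) with upper semicontinuity of the global attractor from part (ii): near any $\lambda_0\in C_P$ the attractor stays close to the persisting hyperbolic sink and, being the global attractor, must coincide with it, so global stability is preserved and $\lambda_0$ has a neighborhood in $C_P$. The most delicate step overall is this last robustness argument, since global stability is not in general an open property and must be recovered from the attractor structure supplied by part (ii).
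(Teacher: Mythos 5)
Your parts (i) and (ii) are correct, and your (i) is in fact a different (and more elementary) route than the paper's: the paper compares the planar orbit with the affine map $L$ in the competitive cone $K=\{(u,v):u\le 0,\ v\ge 0\}$, whereas you collapse everything to the scalar estimate $I_{n+1}\le\eig I_n$ (from $S_n\le 1$ and $1-e^{-\beta I_n}\le\beta I_n$) plus an asymptotically autonomous contraction argument for $S_n$; notably, your limit-and-contradiction argument genuinely covers the borderline case $\eig=1$, which the paper's written proof of Theorem~\ref{SIR}(i) does not treat (it only argues for $\eig<1$). Your (ii) is essentially the paper's proof: trivial, acyclic boundary dynamics, the disease-free point a hyperbolic saddle with expanding eigenvalue $\eig>1$ transverse to $\partial X_0$, and a Hofbauer--So type persistence theorem \citep{HS}; your average Lyapunov function $\rho(S,I)=I$ is an acceptable substitute for the paper's Hartman--Grobman verification that $E_f$ is isolated and $W^s(E_f)\subset\partial X_0$ (to keep the attractor away from $\{S=0\}$ you should also re-use your observation $S_{n+1}\ge 1-e^{-\mu}$, as the paper does). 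The first half of your (iii) also matches the paper: for $\gamma=0$ the total $T_n=S_n+I_n$ satisfies the exact affine recursion, the dynamics reduce to the SIS map, Theorem~\ref{SISDICH}(ii) applies, and the Jacobian at $(1-I^*,I^*)$ is triangular with eigenvalues $e^{-\mu}$ and $F'(I^*)$; your concavity argument for $0<F'(I^*)<1$ is cleaner than the paper's appeal to its graphical proof in \eqref{gradient}.

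The genuine gap is in your openness argument for $C_P$. You invoke ``upper semicontinuity of the global attractor from part (ii),'' but part (ii) is a statement for a single fixed parameter and does not yield upper semicontinuity of $\lambda\mapsto A_\lambda$ on the noncompact state space $X_0$. The standard compactness proof of upper semicontinuity breaks down here: if $x_k\in A_{\lambda_k}$ with $\lambda_k\to\lambda_0$ stayed a distance $\epsilon$ from $A_{\lambda_0}$, one would extract a limiting full orbit of $F_{\lambda_0}$ and place it in $A_{\lambda_0}$ -- but that limiting orbit may lie in, or converge to, the boundary $\{I=0\}$ unless the persistence floors $\eta(\lambda_k)$ supplied by part (ii) are bounded below uniformly in $\lambda$, and part (ii) gives no such uniformity. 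In other words, the assertion that nearby attractors cannot leak toward $\partial X_0$ is exactly a parameter-uniform (robust) persistence statement, and it is the thing that has to be proved; assuming it via ``upper semicontinuity'' is circular. This is precisely the content of the paper's claim \eqref{niteration1}: it constructs $N$, $\delta>0$, $\rho>1$ and a ball $B_C(\lambda_0,s_1)$ such that $I^N_{\lambda}(S,I)\ge\rho I$ for all $\lambda\in B_C(\lambda_0,s_1)$ and $I\in[0,\delta]$, i.e.\ boundary repulsion uniform in the parameter, so that every orbit from $X_0$ is absorbed into the fixed compact set $[0,1]\times[\delta,1]$ for every nearby $\lambda$; only with this uniform absorption in hand can one apply Theorem 2.1 of \citet{SW}, which then delivers the persistence of the globally stable hyperbolic fixed point (subsuming your implicit-function-theorem step). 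Your remaining steps -- the hyperbolic sink persists, and a compact invariant set contained in its basin must equal the sink -- are fine; to close the argument you must either prove an estimate like \eqref{niteration1} or invoke a robust-persistence theorem in its place.
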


In addition we conjecture that there is a globally stable endemic equilibrium when  $\mu=\tilde\mu \Delta t$, $\beta=\tilde \beta\Delta t$,  $\gamma=\tilde \gamma\Delta$, $\Delta t>0$ is sufficiently small, and $\eig>1$ (equivalently, $R_0>1$).

\section{Metastability and extinction in a stochastic SIR model}

\begin{figure}\centering
\includegraphics[width=0.45\textwidth]{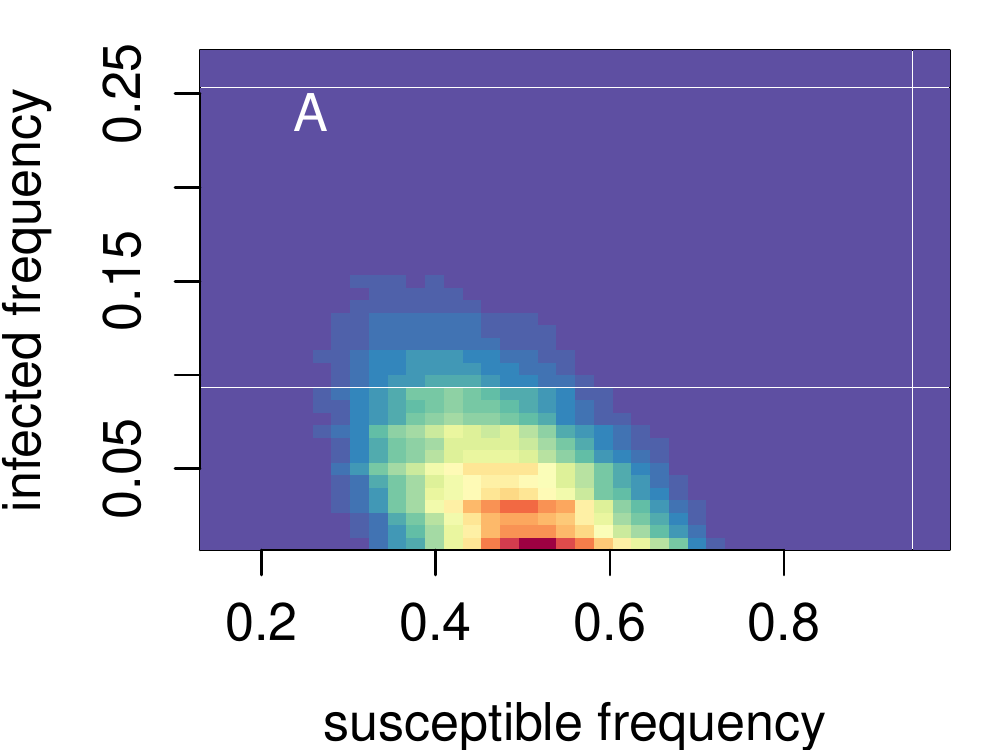}
\includegraphics[width=0.45\textwidth]{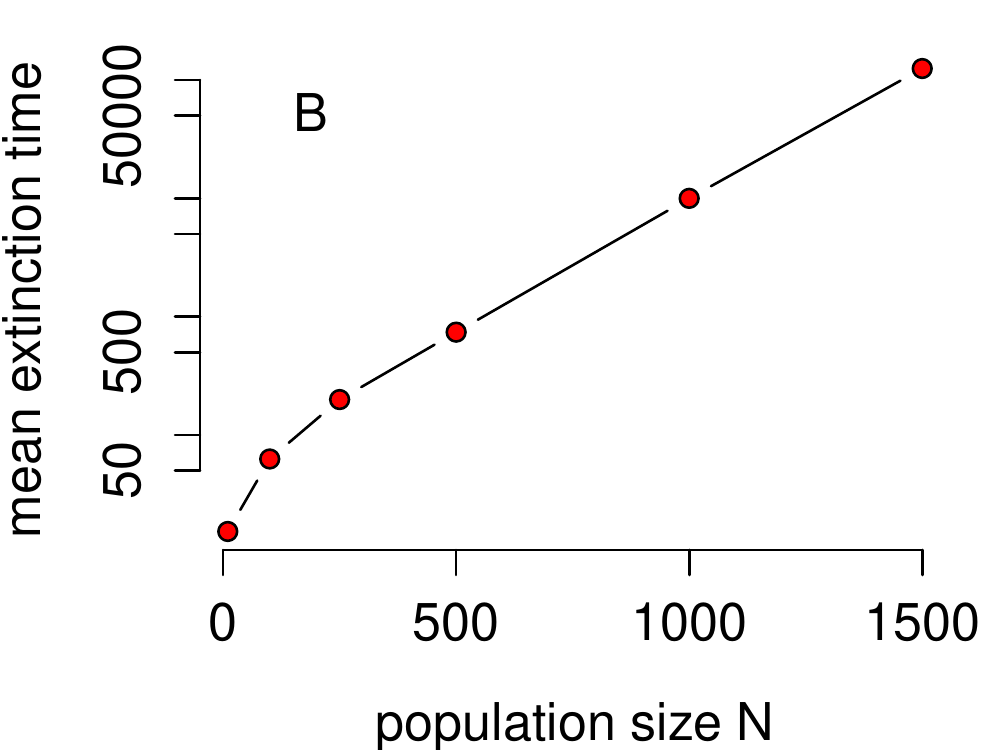}\\
\includegraphics[width=0.45\textwidth]{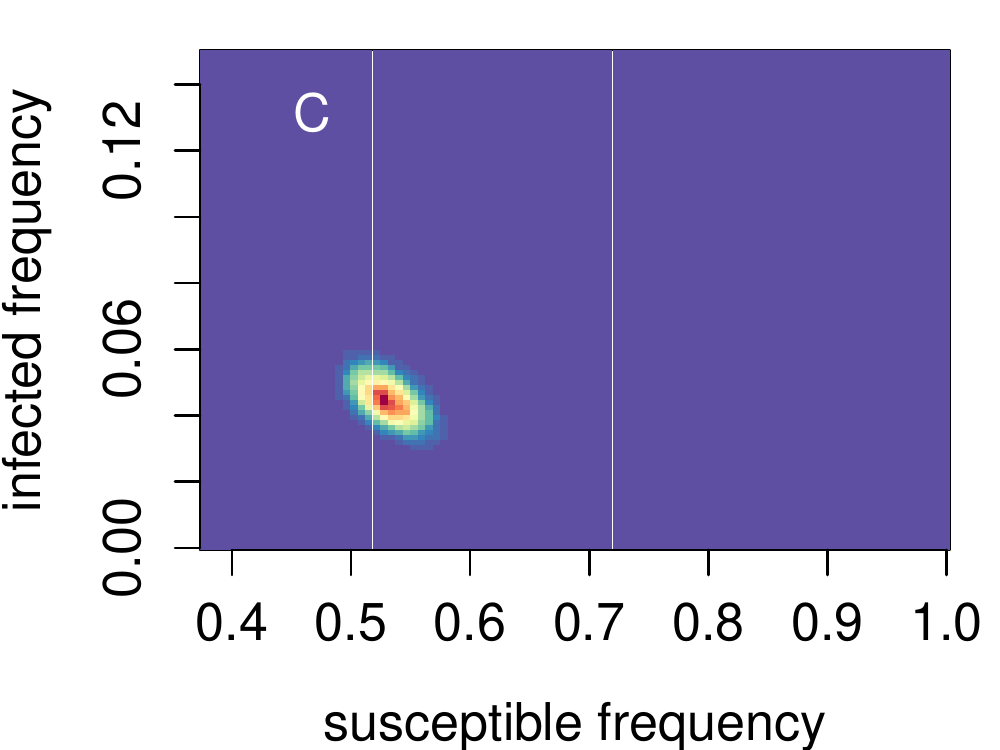}
\includegraphics[width=0.45\textwidth]{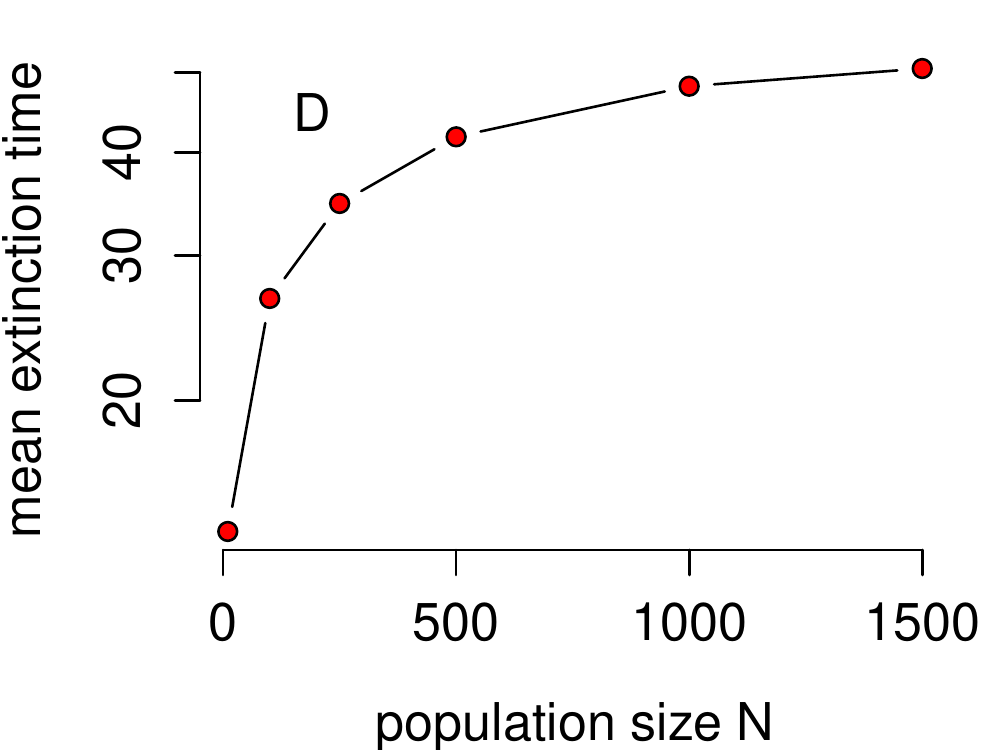}
\caption{Quasi-stationary distributions and mean intrinsic extinction times for the stochastic SIR model. For parameters with $\eig>1$, the quasi-stationary distributions, estimated numerically using the method of \citet{aldous-etal-88}, concentrate on the stable equilibrium as the population size goes from $N=100$ (A) to $N=10,000$ (C). For this $\eig>1$, the associated intrinsic mean extinction times increase exponentially with population size in (B). For parameters with $\eig<1$, the mean extinction times are bounded by $1/(1-\eig)$ in (D). Parameter values: $\mu=0.01$, $\gamma=0.1$ and $\beta=0.2$ for (A)-(C) and $\beta=0.09$ for (D). }\label{fig:SSIR}
\end{figure}

As with the stochastic SIS model, the stochastic SIR model requires the additional parameter of the total population size $N$. For a given population size, the state space $\s$ corresponds to the possible fraction of susceptible and infected individuals in the population,  i.e.
\[
\s=\{(i/N,j/N):i,j\in\{0,1,\dots,N\},i+j\le N\}\subset X.
\]
Let $(S_n,I_n)\in \s$ be the fractions of susceptible and infected individuals at time step $n$. The fraction of removed individuals at time step $n$ equals $R_n=1-S_n-I_n$. Consistent with the deterministic SIR model, we assume (i) each susceptible individual lives and becomes infected with probability $e^{-\mu}(1-e^{-\beta I_n})$ independent of each other, (ii) each infected individual either remains infected, dies and gets replaced with a susceptible individual, or enters the removed class with probabilities $e^{-\mu-\gamma}$, $1-e^{-\mu}$, or $e^{-\mu}(1-e^{-\gamma})$ independent of each other, and (iii) each removed individual dies and creates a new susceptible individual with probability $1-e^{-\mu}$. To account for these transitions, let $W_{n+1}$ be a binomial random variable with $NS_n$ trials and probability of success $e^{-\mu}(1-e^{-\beta I_n})$ (i.e.  susceptible individuals that will become infected), $X_{n+1}$ be a binomial random variable with $NI_n$ trails and probability of success $1-e^{-\mu}$ (i.e. infected individuals that die and get replaced by a susceptible individual), $Y_{n+1}$ be a binomial random variable with $NI_n-X_{n+1}$ trials with probability of success $e^{-\gamma}$ (i.e. non-dying infected individuals that will not enter the removed class), and $Z_{n+1}$ be a binomial random variable with $N(1-I_n-S_n)$ trials with probability of success $1-e^{-\mu}$ (i.e. removed individuals that die and get replaced with a susceptible individual). Under these assumptions, the stochastic SIR model is

\begin{equation}\label{SSIR}
\begin{aligned}
S_{n+1}&=\frac{1}{N}\left(NS_n-W_{n+1}+X_{n+1}+Z_{n+1}\right),\\
I_{n+1}&=\frac{1}{N}\left(W_{n+1}+Y_{n+1}\right) \mbox{, where}\\
W_{n+1}&\sim \rm{Binom}(NS_n,e^{-\mu}\big(1-e^{-\beta I_n}\big)),\\
X_{n+1}&\sim \rm{Binom}(NI_n,1-e^{-\mu}),\\
Y_{n+1}&\sim\rm{Binom}(NI_n-X_{n+1},e^{-\gamma}), \mbox{ and}\\
Z_{n+1}&\sim \rm{Binom}(N(1-I_n-S_n),1-e^{-\mu}).
\end{aligned}
\end{equation}

As with the stochastic SIS model, the disease goes extinct in finite time with probability one for the stochastic model. The following proposition follows from standard results in stochastic processes~\citep[see e.g.][Theorem 3.20]{harier-18}.

\begin{proposition}\label{prop2} Assume that $\mu+\gamma$ and $\beta$ are positive.  With probability one, $I_n=0$ for some $n\ge 1.$
\end{proposition}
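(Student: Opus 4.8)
The plan is to treat the sequence $(S_n,I_n)_{n\ge 0}$ defined by \eqref{SSIR} as a time-homogeneous Markov chain on the finite state space $\s$ and then to invoke the elementary fact that a finite-state chain is absorbed with probability one into any absorbing set that is reachable from every state \citep[Theorem 3.20]{harier-18}. The relevant absorbing set is the disease-free boundary $\{(S,0)\in\s\}$. First I would confirm that this set is absorbing: when $I_n=0$ we have $1-e^{-\beta I_n}=0$, so $W_{n+1}\sim\mathrm{Binom}(NS_n,0)$ vanishes almost surely, while $X_{n+1}$ and $Y_{n+1}$ are binomials on $NI_n=0$ trials and hence also vanish; consequently $I_{n+1}=(W_{n+1}+Y_{n+1})/N=0$ and the chain remains on the boundary thereafter.

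Second, I would establish a strictly positive one-step probability of extinction from every infected state. Fix $(S_n,I_n)\in\s$ with $I_n>0$. Since the newly infected susceptibles (counted by $W_{n+1}$) and the previously infected individuals (counted by $X_{n+1},Y_{n+1}$) are disjoint groups updated independently, $W_{n+1}$ is conditionally independent of $(X_{n+1},Y_{n+1})$ given the current state, so
\[
\P[I_{n+1}=0\mid (S_n,I_n)]=\P[W_{n+1}=0]\,\P[Y_{n+1}=0].
\]
The first factor equals $\big(1-e^{-\mu}(1-e^{-\beta I_n})\big)^{NS_n}$, which is positive because the success probability $e^{-\mu}(1-e^{-\beta I_n})\le 1-e^{-\beta I_n}<1$ is strictly below one. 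For the second factor, each of the $NI_n$ previously infected individuals remains infected, independently, with probability $e^{-\mu-\gamma}$, so $Y_{n+1}\sim\mathrm{Binom}(NI_n,e^{-\mu-\gamma})$ marginally and $\P[Y_{n+1}=0]=(1-e^{-\mu-\gamma})^{NI_n}$, which is positive precisely because $\mu+\gamma>0$. Hence $\P[I_{n+1}=0\mid (S_n,I_n)]>0$ for every state with $I_n>0$.

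Finally, setting $\varepsilon:=\min\{\P[I_{n+1}=0\mid (S_n,I_n)=(s,i)] : (s,i)\in\s,\ i>0\}$, a minimum of finitely many strictly positive numbers and therefore itself positive, the Markov property yields $\P[I_k>0 \text{ for all } 1\le k\le n]\le(1-\varepsilon)^n\to 0$ as $n\to\infty$. Thus $\P[I_n>0 \text{ for all } n\ge 1]=0$, which is the assertion. I do not anticipate a genuine obstacle here: the argument is exactly parallel to Proposition~\ref{prop1} for the SIS model, and the only points demanding care are the conditional-independence factorization of the event $\{I_{n+1}=0\}$ and the bookkeeping that identifies the marginal law of $Y_{n+1}$, tracing the positivity of $\P[Y_{n+1}=0]$ back to the hypothesis $\mu+\gamma>0$; everything else is the standard finite-chain absorption argument.
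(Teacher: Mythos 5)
Your proposal is correct and follows essentially the same route as the paper, which simply invokes the standard finite-state absorption result (Theorem 3.20 of the cited reference) without further detail. Your verifications—that the boundary $\{I=0\}$ is absorbing, that $W_{n+1}$ is conditionally independent of $Y_{n+1}$ with $Y_{n+1}\sim\mathrm{Binom}(NI_n,e^{-\mu-\gamma})$ marginally so that $\P[I_{n+1}=0\mid(S_n,I_n)]>0$ from every infected state (using exactly the hypotheses $\mu+\gamma>0$ and finiteness of $\beta I_n$), and the uniform geometric bound over the finite state space—are precisely the details the paper leaves implicit in that citation, and they are all accurate.
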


To characterize metastability and extinction times, define $\s_+=\{(x_1,x_2)\in \s: x_2>0\}$ to be all the states where the disease persists. For all pairs of states $x,y\in\s_+$, let $Q_{xy}=\P[(S_{n+1},I_{n+1})=y|(S_n,I_n)=x]$ be the transition probabilities restricted to the transient states and $Q=(Q_{xy})_{x,y\in\s_+}$ be the associated matrix. Let $\pi=(\pi_x)_{x\in \s_+}$ be the quasi-stationary distribution with associated persistence probability $\lambda$ i.e. $\sum_{x\in\s_+} \pi_x=1$, $\pi_x>0$ for all $x\in\s$ and $\pi Q= \lambda \pi.$ Our main result for the stochastic SIR model concerns the behavior of the quasi-stationary distribution and the intrinsic mean time to extinction for large population size $N$.

\begin{theorem}\label{SIRSICH} Assume $\mu+\gamma>0,\beta>0$. For each $N\ge 1$, let $\pi^N$ be the quasi-stationary distribution and $\lambda^N$ the corresponding eigenvalue for \eqref{SSIR}. Let $\eig=\beta e^{-\mu}+e^{-\mu-\gamma}.$ Then
\begin{enumerate}
  \item[(i)]If $\eig<1$, then $\lambda^N\le \eig$ for all $N\ge 1$ and $\lim_{N\to\infty}\pi^N=\delta_{(1,0)}$ where $\delta_{(1,0)}$ is a Dirac measure at disease-free equilibrium $(1,0)$ and convergence is in the weak* topology.
  \item[(ii)] If $\eig>1$, then $\limsup_{N\to\infty}\frac{1}{N}\log (1-\lambda^N)<0$ and there exists a compact set $K\subset (0,1)^2$ such that $\pi^*(K)=1$ for every weak* limit point $\pi^*$ of $\pi^N$ as $N\to \infty,$ and where $\pi^*$ is invariant for the deterministic model \eqref{DSIR}. Moreover, if $(\mu,\beta,\gamma)\in C_P$, then $\lim_{N\to\infty}\mu^N=\delta_{(S^*,I^*)}$ where $\delta_{(S^*,I^*)}$ is the Dirac measure at the unique positive fixed point $(S^*,I^*)$ of equation~\eqref{DSIR}.
\end{enumerate}
\end{theorem}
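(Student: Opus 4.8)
The plan is to follow the structure of the proof of Theorem~\ref{SISSICH}, reducing the eigenvalue estimates to a one-step drift inequality for the number of infected individuals and handing the concentration statements to the large-deviation framework of \citet{FS}. For a state $x=(S,I)\in\s_+$ set $h(x)=NI$. A direct computation from \eqref{SSIR} gives
\[
\mathbb{E}[NI_{n+1}\mid (S_n,I_n)=x]=NS\,e^{-\mu}\big(1-e^{-\beta I}\big)+NI\,e^{-\mu-\gamma}\le \eig\,NI,
\]
where I used $1-e^{-\beta I}\le \beta I$ and $S\le 1$. Since the $I_{n+1}=0$ outcome contributes nothing to $Qh$, we have $(Qh)(x)=\mathbb{E}[NI_{n+1}\,\mathbf{1}\{I_{n+1}>0\}\mid x]=\mathbb{E}[NI_{n+1}\mid x]\le \eig\,h(x)$, i.e.\ $Qh\le\eig\,h$ on $\s_+$. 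Pairing this with the positive left eigenvector gives $\lambda^N(\pi^N h)=(\pi^N Q)h=\pi^N(Qh)\le\eig(\pi^N h)$, and since $\pi^N h>0$ we conclude $\lambda^N\le\eig$ for all $N$. This is the eigenvalue bound of (i); it holds vacuously when $\eig>1$.

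For the concentration statements I would invoke \citet{FS}, after checking their hypotheses for \eqref{SSIR}. As $N\to\infty$ the chain $(S_n,I_n)$ on the compact state space $X$ satisfies a large-deviation principle whose law-of-large-numbers limit is the deterministic map $F$ of \eqref{DSIR}; because each increment is a sum of independent binomials with state-dependent success probabilities, the rate function is the corresponding relative-entropy (Cram\'er) functional, whose goodness and lower semicontinuity follow from the explicit transition probabilities. The absorbing set is $\partial X_0=\{I=0\}$. \citet{FS} then guarantee that every weak* limit point $\pi^*$ of $\pi^N$ is an $F$-invariant probability measure, that such limits avoid large-deviation repellers and are carried by attractors, and that $-\frac1N\log(1-\lambda^N)$ converges to the minimal cost of a deterministic-to-absorbing escape path.

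These two ingredients finish both parts. When $\eig<1$, Theorem~\ref{SIR}(i) makes $(1,0)$ globally asymptotically stable, so $\delta_{(1,0)}$ is the unique $F$-invariant probability measure; hence every limit point equals $\delta_{(1,0)}$ and $\pi^N\to\delta_{(1,0)}$, proving (i). When $\eig>1$, Theorem~\ref{SIR}(ii) supplies a global compact attractor $A\subset\mathrm{int}(X_0)\subset(0,1)^2$, while the disease-free fixed point $(1,0)$ is a repeller because the linearized per-capita growth $\eig$ exceeds $1$; the repeller-avoidance conclusion of \citet{FS} then forces every $\pi^*$ to be $F$-invariant and supported on a compact $K\subset(0,1)^2$. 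Since $A$ lies a positive distance from $\{I=0\}$ and the deterministic drift near the boundary points inward, the minimal escape cost is strictly positive, yielding $\limsup_{N\to\infty}\frac1N\log(1-\lambda^N)<0$. Finally, if $(\mu,\beta,\gamma)\in C_P$ the attractor collapses to the single globally stable hyperbolic fixed point $(S^*,I^*)$, whose unique invariant measure is $\delta_{(S^*,I^*)}$, so $\pi^N\to\delta_{(S^*,I^*)}$.

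The hard part will be verifying, uniformly in $N$, the precise hypotheses of \citet{FS} for this two-dimensional model: establishing a joint large-deviation principle for the coupled binomials $(W_{n+1},X_{n+1},Y_{n+1})$---equivalently for $(S_{n+1},I_{n+1})$---with a good, lower-semicontinuous rate function, and checking the accessibility and repeller conditions that guarantee the supercritical quasi-stationary distribution charges the interior attractor $A$ rather than the still-invariant disease-free point $\delta_{(1,0)}$. The most delicate point is the degeneracy of the rate function near $\{I=0\}$, where the infection success probability $e^{-\mu}(1-e^{-\beta I})$ vanishes: this is precisely the region controlling the quasi-potential that sets $1-\lambda^N$, so its analysis underlies both the exponential estimate in (ii) and the exclusion of mass escaping to the disease-free equilibrium.
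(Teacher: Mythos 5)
Your eigenvalue bound $\lambda^N\le\eig$ is correct and is essentially the paper's own argument: the drift inequality $\E[NI_{n+1}\mid (S_n,I_n)=(x,y)]= NSe^{-\mu}(1-e^{-\beta y})+Nye^{-\mu-\gamma}\le \eig Ny$ paired against the positive left eigenvector (the paper gets the same inequality from \eqref{ComP5}). The concentration claims, however, rest on results you attribute to \citet{FS} that are either misapplied or not there. For part (i), you deduce $\pi^N\to\delta_{(1,0)}$ from ``every weak* limit point of $\pi^N$ is $F$-invariant,'' but the invariance result in \citet{FS} (their Proposition 3.11) requires $\lambda^N\to 1$, and in the subcritical case your own bound gives $\lambda^N\le\eig<1$, so that hypothesis fails; limits of QSDs with $\lambda^N$ bounded away from $1$ satisfy a quasi-invariance-with-killing relation, not invariance. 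The paper instead invokes Theorem 3.12 of \citet{FS}, which is tailored to the case where the absorbing equilibrium is globally attracting and needs only Standing Hypothesis 1.1. Similarly, your route to $\limsup_N\frac1N\log(1-\lambda^N)<0$ goes through the assertion that $-\frac1N\log(1-\lambda^N)$ converges to a minimal Freidlin--Wentzell escape cost; \citet{FS} proves no such convergence (identifying that rate is flagged as an open problem in the paper's Discussion). What is available is the one-sided bound $1-\lambda^N\le\beta_\delta(N)$ (Lemma 3.9(b) of \citet{FS}) together with a uniform Chernoff estimate $\beta_\delta(N)\le e^{-N\rho(\delta)}$; proving that estimate uniformly over the simplex, including the degenerate corner $I\approx 0$ and the fact that $Y_{n+1}$ has a random number of trials $NI_n-X_{n+1}$, is exactly the content of the paper's Proposition~\ref{prop2:ld}, which your sketch does not carry out.

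For part (ii) there is also a factual error: $(1,0)$ is not a repeller but a saddle (eigenvalues $e^{-\mu}<1$ and $\eig>1$, with stable manifold $\{I=0\}$), so no ``repeller-avoidance'' principle applies. The genuine danger is QSD mass accumulating along $\{I=0\}$, where deterministic orbits are funneled toward $(1,0)$ before being expelled, and this is excluded in the paper via Lemma 3.9(b') of \citet{FS}: one must show that $\beta_\delta(N)$ is negligible relative to the minimal one-step extinction probability from states with $I\le\eta$, which works because that probability is at least $(1-e^{-\mu-\gamma})^{N\eta}\bigl(1-e^{-\mu}(1-e^{-\beta\eta})\bigr)^{N}\ge e^{-2N\rho(\delta)/3}$ for $\eta$ small, dominating $e^{-N\rho(\delta)}$. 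You correctly identify this boundary degeneracy as ``the most delicate point,'' but you defer it rather than resolve it; since both conclusions of (ii) (the exponential bound and $\pi^*(K)=1$, hence also the convergence to $\delta_{(S^*,I^*)}$ when $(\mu,\beta,\gamma)\in C_P$) hinge on precisely these deferred steps, the proposal is an outline of the paper's strategy rather than a complete proof.
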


The first assertion of Theorem~\ref{SIRSICH} implies that if $\eig<1$ and the population size is large, then any long-term transient mostly involves low frequencies of infected individuals and the mean time to extinction after these transients is less than $\frac{1}{1-\eig}.$ Furthermore, whenever permanence of the deterministic model corresponds to a globally stable equilibrium (see Theorem~\ref{SIR}), the QSDs concentrate on a Dirac measure at this equilibrium i.e. it supports the only invariant measure in $K$ for the deterministic dynamics. The second assertion of Theorem~\ref{SIRSICH} implies that if $\eig>1$ and the population size is large, then the long-term transients fluctuate away from the disease-free equilibrium and the mean time to extinction following these transients increases exponentially with population size i.e. there exist $c_1,c_2>0$ such that $\frac{1}{1-\lambda^N}\ge c_1 e^{c_2N}$ for all $N\ge 1.$ Figure~\ref{fig:SSIR} illustrates these conclusions.

\section{Discussion}

This paper formulates and provides a mathematically rigorous analysis of deterministic and stochastic, discrete-time SIS and SIR models. The stochastic models are based on probabilistic, individual-based update rules. The conditional expected change in the fraction of infected and susceptible individuals given the current values of these fractions determines the update rule for the deterministic model and, in this sense, the deterministic model is the mean field model for the stochastic models. 

Many earlier discrete-time epidemic models of SIS and SIR dynamics have been derived using numerical approximation schemes for differential equations~\citep{allen1994,ma2013global,satsuma2004extending,enatsu2010global,castillo2001discrete}. These models, including the one-dimensional ones, can exhibit oscillatory dynamics. In contrast, our model, which is based on individual-based update rules and uses an exponential escape function, is most similar to higher dimensional, discrete-time epidemiological models that have been used for applications to specific diseases~\citep{emmert2004population,emmert2006population,allen2008basic}. Unlike the models based on numerical approximation schemes, our analysis and numerical explorations suggest that our models always approach a global stable equilibrium. When $R_0\le 1$, we prove that all trajectories asymptotically approach the disease-free equilibrium for both the SIS and SIR models.  When $R_0>1$, we prove the disease persists for both models, approaches a globally stable, endemic equilibrium for the SIS model, and provide sufficient conditions for global stability of the endemic equilibrium for the SIR model. Extensive numerical simulations suggest that this endemic equilibrium of the SIR model is globally stable whenever $R_0>1$. Hence, we conjecture that $R_0>1$ always implies global stability of the endemic equilibrium for the SIR model. 

Unlike the deterministic model for which the disease persists indefinitely when $R_0>1$ and only goes asymptotically extinct over an infinite time horizon when $R_0\le 1$, the fraction of infected in the stochastic model becomes zero in finite time for all values of $R_0$. To understand this well-know discrepancy~\citep{bartlett1966introduction,Keeling2011,diekmann2012mathematical} for our model, we characterized the long-term transients using quasi-stationary distributions for finite-state, discrete-time Markov chains~\citep{darroch_seneta1965}. For these characterizations, we used the per-capita growth rate $\eig=\beta e^{-\mu}+ e^{-\mu-\gamma}$ of the infection at the disease free equilibrium.  When $\eig<1$ (equivalently, $R_0<1$), the mean time to extinction, when following the quasi-stationary distribution, is $\le 1/(1-\eig)$ for all population sizes and for both the SIS and SIR models. Indeed, we conjecture that as $N\to\infty$, this mean time to  extinction converges to $1/(1-\eig)$. While $R_0<1$ if and only if $\eig<1$, we have $\eig=(1-e^{-\mu-\gamma})R_0+e^{-\mu-\gamma}>R_0$ whenever $R_0<1$. Hence, even if $R_0$ is very small, the mean times to extinction can be arbitrarily  long. For example, given any $0<x<y<1$, we can make $R_0=x$ and $\eig=y$ by choosing $\gamma=0$, $e^{-\mu}=y/2$, and $\beta=x(1-e^{-\mu})$. 

When $R_0>1$ (equivalently $\eig>1$), we show that the mean extinction times increase exponentially with the population size $N$ and the quasi-stationary distributions concentrate on positive invariant sets for the deterministic model for large $N$. In particular, coupled with our analysis of the deterministic dynamics, our results imply that the quasi-stationary behavior for large $N$ always concentrates near the globally stable, endemic equilibrium of the SIS model. We provide sufficient conditions for the same conclusion for the SIR model, and conjecture that this always occurs for the SIR model. These conclusions are consistent with earlier studies of continuous-time Markov models where the analysis was done using diffusion approximations of the individual-based models~\citep{barbour1975,kryscio_lefevre1989,naasell1996,naasell1999,andersson2000stochastic,naasell2002,lindholm_britton2007,andersson2011stochastic,clancy_tjia2018}. In contrast, our results apply large deviation methods from \citep{FS} directly to the individual-based models. An open question for the stochastic model with $R_0>1$ concerns  the asymptotic rate at which the extinction times increase exponentially with $N$. Specifically, what is the value of $\alpha$ such that the mean time to extinction grows like $\exp(\alpha N)$ for large $N$? The diffusion approximations provide one approach to find potential candidates for $\alpha.$ 

When $R_0>1$,  we found that the mean time to extinction can be arbitrarily large even for a fixed population size. For example, this occurs when recovery and mortality rates approach zero in which case $R_0$ also increases without bound but $\eig$ remains bounded above by $\beta+1$. In contrast, increasing contact rates (which increase $\eig$ and $R_0$ without bound) leads to extinction times that are constrained by population size, recovery rates and mortality rates.

In addition to the open mathematical questions that we have already raised, future challenges include analyzing extensions of our models. These extensions could include additional compartments such as SEIR models, multi-age group epidemic models, and SIR type models with vaccination~\citep{anderson_may1991,Keeling2011,Kong2015}. More generally, when the discrete-time system is autonomous, the mathematical approaches used here should be  applicable to study quasi-stationary distributions and the intrinsic extinction times. However, when population sizes or transmission rate change stochastically over time \citep{Anderson1979,Pollicott2012}, new mathematical approaches are required for studying the impact of these environmentally driven random fluctuations on intrinsic extinction times.

\bigskip

\section{Proof of Theorem~\ref{SISDICH}}
Suppose that $\eig<1$. Then we shall prove that
\begin{equation}\label{ComP1}
  F(I):=e^{-\mu}(1-I)\big(1-e^{-\beta I}\big)+e^{-\mu-\gamma}I<  \big(\beta e^{-\mu}+ e^{-\mu-\gamma}\big)I =:L(I),\ I\in (0,1].
\end{equation}
It is easy to see that (\ref{ComP1}) is equivalent to
\begin{equation}\label{ComP2}
  (1-I)\big(1-e^{-\beta I}\big)< \beta I,\ I\in (0,1].
\end{equation}
Let $g(I):=(1-I)\big(1-e^{-\beta I}\big)-\beta I$. Then $g(0)=0$,
$$g'(I)=-(1+\beta)+e^{-\beta I}+\beta(1-I)e^{-\beta I}, \ g'(0)=0,$$
and
$$g''(I)=-\beta e^{-\beta I}\big(2+\beta(1-I)\big)<0,\ I\in [0,1].$$
Therefore,
$$g'(I)< g'(0)=0,\ I\in (0,1].$$
Furthermore,
$$g(I)< g(0)=0,\ I\in (0,1].$$
This shows that (\ref{ComP2}) holds.

Fix $I_0\in[0,1]$ and set $I_n:=F^n(I_0),\ J_n:=L^n(I_0)=\eig^nI_0,\ n=1, 2, \cdots.$ We claim that
\begin{equation}\label{ComP3}
  I_n\le J_n,\ n=1, 2, \cdots.
\end{equation}
For $n=1$, (\ref{ComP3}) follows immediately from (\ref{ComP1}). Assume that (\ref{ComP3}) holds for $n$. Then by (\ref{ComP1}) and the increasing of $L$, we have that
$$I_{n+1}=F(I_n)\le L(I_n)\le L(J_n)=J_{n+1}.$$
By mathematical induction, (\ref{ComP3}) is valid for all positive integers. Since $\eig<1$, $L^n(I_0)=\eig^n I_0 \rightarrow 0$ as $n\rightarrow 0$, i.e. $0$ is globally asymptotically stable.

 Suppose that $\eig=1$. Then it follows from (\ref{ComP1}) that $F(I)<I,\ I\in (0,I]$. Using Feigenbaum's method given in (ii), we can easily prove that $0$ is still globally asymptotically stable in this case.

(ii) Suppose that $\eig>1$. Then it is not difficult to prove that $F(I)$ has a unique positive fixed point, denoted by  $I^*$,  and  $F(I)$ has a unique positive critical point $I_c^*$. We will divide (ii) into two cases:
$${\rm (iia)}\ I^*\le I_c^*,\  \ \ {\rm (iib)}\ I^*>I_c^*.$$
We use Feigenbaum's method by depicting graphs to show that all nontrivial trajectories converge to the fixed point $I^*$.

(iia) Take $I_0\in (0, I^*)$. As depicted in Figure \ref{fig:1111}(a), the iterating sequence $I_n$ increasingly tends to $I^*$. If $I_0\in (I^*, I_c^*]$ with $I^*<I_c^*$, then Figure \ref{fig:1111}(b) shows that  the iterating sequence $I_n$ decreasingly tends to $I^*$. If $I_0 > I_c^*$, then the first iteration $I_1\in (0, I^*)$. After the first iteration, $I_n$ increasingly tends to $I^*$, see Figure \ref{fig:1111}(c).

(iib) It is easy to see that $F(I)$ is decreasing when $I>I_c^*$, that is, $F'(I)\le 0, \ I\in (I_c^*,1]$. By computation,
$$F'(I)\ge e^{-\mu}\big(-1+e^{-\beta I}+e^{-\gamma}\big)>-1.$$
Therefore, $-1<F'(I)\le 0, \ I\in (I_c^*,1]$.

First, we prove that $I_0<I_2<I^*$ if  $I_c^*<I_0<I^*$. Let $K$ be a line through $(I^*,F(I^*))$ whose gradient is $-1$. Because we have proved $F'(I)> -1$, $S=F(I)$ is between K and $S=I$ as shown in Figure  \ref{fig:2222}(a).

Plotting a line which is perpendicular to $I$-axis through $(I_0,0)$. This line intersects $S=I,S=F(I)$ and $K$  at $A,E$ and $B$, respectively. Through $B$ and $E$, we add lines parallel to $I$-axis, which intersect $S=I$ at $C$ and $G$, respectively. Because the ordinate of $B$ is strictly greater than the ordinate of $E$, the abscissa of $C$ is strictly greater than the abscissa of $G$. Sketch the lines perpendicular to $I$-axis through $C$ and $G$, which intersect $S=F(I)$ at $J$ and $H$, respectively. Because $S=F(I)$ is monotonically decreasing on $(I_c^*,1)$, the ordinate of $J$ are strictly less than the ordinate of $H$. Now we extend the segment $\overline{CJ}$ such that it intersects $K$ at $D$. The ordinate of $D$ is strictly less than the ordinate of $H$. Sketch the lines parallel to $I$-axis through $D$ and $H$. Then they intersect $S=I$ at $A$ and $L$, respectively. The ordinate of $L$ is strictly greater than the ordinate of $A$. Then the abscissa of $L$ is strictly greater than the abscissa of $A$, that is, $I_0\le I_2 \le I^*$ as shown Figure  \ref{fig:2222}(b).

Now, we want to show $\lim_{n \to \infty}F^{2n}(I_0)=I^*$ by contradiction. Suppose it is not true. Then $\lim_{n \to \infty}F^{2n}(I_0)=I^{**}< I^*$. Thus we set $I^{**}$ to be the initial point and repeat the above process, then we have $F^2(I^{**})> I^*$. This is contradictory to $\lim_{n \to \infty}F^{2n}(I_0)=I^{**}$, therefore, $\lim_{n \to \infty}F^{2n}(I_0)=I^*$. Similarly, we can prove $\lim_{n \to \infty}F^{2n+1}(I_0)=I^*$. This concludes that $I_n$ oscillates around $I^*$ and converges to $I^*$.

The proof of the case $I_0<I_c^*$ is given in the Figure \ref{fig:2222}(c).

\begin{figure}[!htb]
\centering
  \includegraphics[width=13cm]{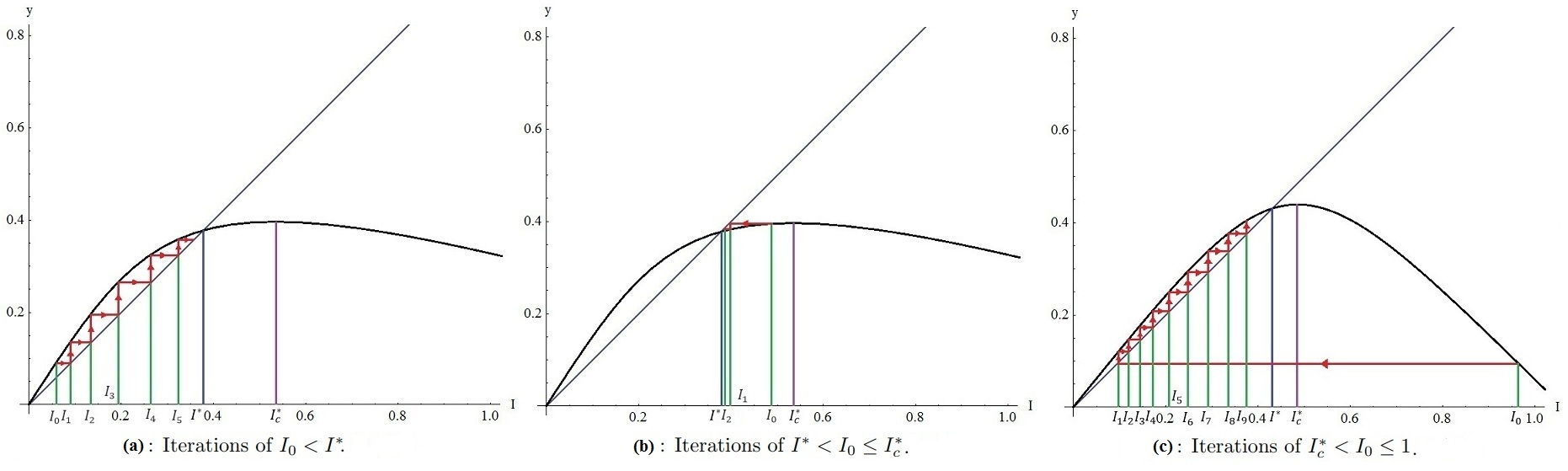}
  \caption{$I^* \leq I^*_c$.}\label{fig:1111}
\end{figure}

\begin{figure}[!htb]
\centering
  \includegraphics[width=13cm]{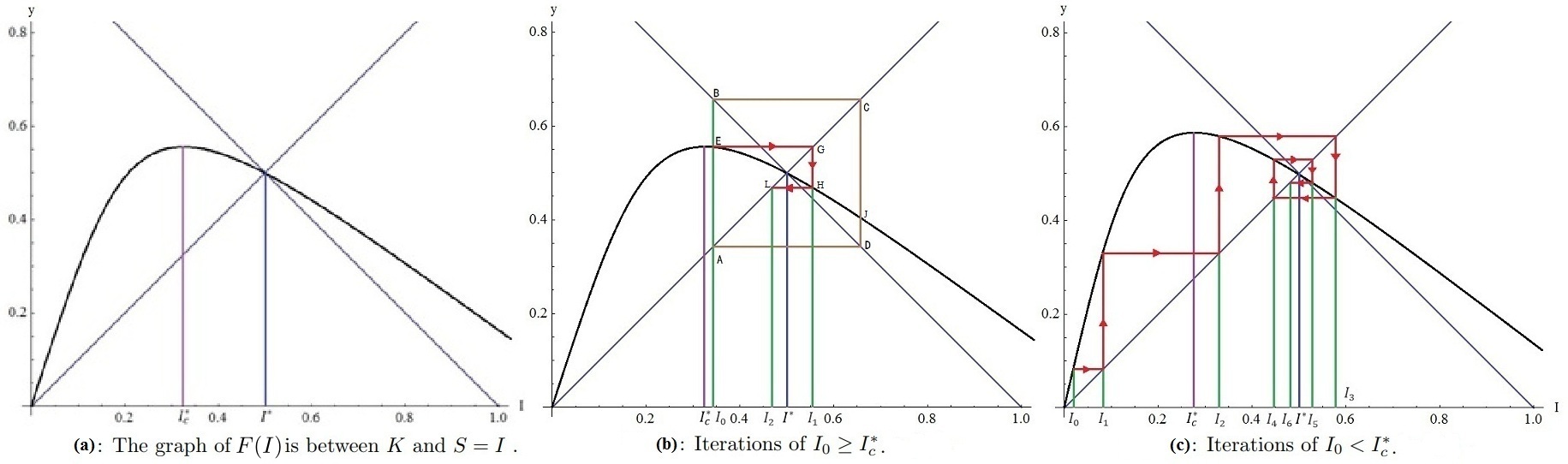}
  \caption{$I^* > I_c^*$ .}\label{fig:2222}
\end{figure}

\section{Proof of Theorem~\ref{SISSICH}}

We use results from \citet{FS} to prove Theorem~\ref{SISSICH}. We begin by verifying that Standing Hypothesis 1.1 of \citet{FS} holds. In their notation, ``$\varepsilon$'' corresponds to $\frac{1}{N}$ in our models i.e. small demographic noise corresponds to large population size $N$. For all $\delta>0$ and $N\ge 1$, define
\[
\beta_\delta(N)=\sup_{x\in[0,1]}\P\left[|I_{n+1}-F(x)|\ge \delta | I_n=x\right]
\]
where $F(x)=e^{-\gamma-\mu}x+e^{-\mu}(1-e^{-\beta x})(1-x)$ corresponds to the right hand side of the deterministic model in equation~\eqref{DSIS}. Standing Hypothesis 1.1 of \citet{FS} requires that $\lim_{N\to\infty} \beta_\delta(N)=0$ for all $\delta>0.$ The following proposition proves something stronger using large deviation estimates.

\begin{proposition}\label{prop:ld} There exists a function $\rho:(0,\infty)\to (0,\infty)$ such that
\[
\beta_\delta(N)\le \exp(-N \rho(\delta))
\]
for all $N\ge 1$ and $\delta>0.$
\end{proposition}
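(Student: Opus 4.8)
The plan is to read the event $\{|I_{n+1}-F(x)|\ge\delta\}$ as a large-deviation event for a normalized sum of $N$ independent Bernoulli random variables, and then to control it by a Chernoff/Hoeffding estimate whose exponential rate is uniform in the state $x$. The first, essential observation is that $F$ is exactly the one-step conditional mean: from \eqref{SSIS}, given $I_n=x$ we have $\E[X_{n+1}\mid I_n=x]=Nx\,e^{-\mu-\gamma}$ and $\E[Y_{n+1}\mid I_n=x]=N(1-x)e^{-\mu}(1-e^{-\beta x})$, so that $\E[I_{n+1}\mid I_n=x]=F(x)$. Thus $\{|I_{n+1}-F(x)|\ge\delta\}$ is precisely a deviation of $I_{n+1}$ from its conditional expectation, which is exactly what a concentration inequality bounds.

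Next I would write $N I_{n+1}=X_{n+1}+Y_{n+1}$ as a sum of exactly $N$ independent $\{0,1\}$-valued Bernoulli variables: the $Nx$ indicators of currently infected individuals that stay infected (success probability $e^{-\mu-\gamma}$) together with the $N(1-x)$ indicators of susceptibles that become infected (success probability $e^{-\mu}(1-e^{-\beta x})$). Applying Hoeffding's inequality to this sum, centered at its conditional mean $NF(x)$, with threshold $t=N\delta$, and taking a union bound over the two tails gives
\[
\P\big[\,|I_{n+1}-F(x)|\ge\delta \mid I_n=x\,\big]\le 2\exp(-2N\delta^2).
\]
The decisive point is that Hoeffding's rate depends only on the number of summands $N$ and their common range $[0,1]$, and not on the individual success probabilities; hence the bound is uniform in $x\in[0,1]$, even though both the numbers of trials and the success probabilities depend on $x$. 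Taking the supremum over $x$ yields $\beta_\delta(N)\le 2\exp(-2N\delta^2)$.

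Finally I would absorb the multiplicative constant into the exponent to reach the stated form, e.g. setting $\rho(\delta):=\delta^2$, so that $2\exp(-2N\delta^2)\le\exp(-N\rho(\delta))$ whenever $N\delta^2\ge\log 2$. The computation is otherwise routine, so I do not expect a deep obstacle; the two points that genuinely require care are (i) the uniformity over $x$, including the boundary values $x=0$ and $x=1$ where one of the two binomials degenerates but the $[0,1]$ range of each summand—and hence the rate—is unchanged, and (ii) matching the exact functional shape $\exp(-N\rho(\delta))$ despite the factor of $2$ coming from the two-sided union bound. Since Standing Hypothesis 1.1 of \citet{FS} requires only $\lim_{N\to\infty}\beta_\delta(N)=0$, the precise constant is immaterial; if one insists on the literal bound for every $N\ge1$, it can be recovered either by retaining the prefactor $2$ or by replacing Hoeffding with the sharper Cramér/relative-entropy Chernoff bound, which furnishes an explicit positive rate $\rho(\delta)$ written in terms of the Kullback–Leibler divergences of the two Bernoulli families.
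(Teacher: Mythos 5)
Your argument is correct in substance, and it takes a genuinely different, more off-the-shelf route than the paper. The paper also runs a Chernoff-type argument, but does the work by hand: it writes out the conditional moment generating function of $NI_{n+1}$, forms
\[
\psi(t,x)=-tF(x)+x\log\bigl(1-a+ae^{t}\bigr)+(1-x)\log\bigl(1-b(x)+b(x)e^{t}\bigr),
\qquad a=e^{-\mu-\gamma},\quad b(x)=e^{-\mu}\bigl(1-e^{-\beta x}\bigr),
\]
and obtains uniformity in $x$ from $\psi(0,x)=\frac{\partial \psi}{\partial t}(0,x)=0$ together with convexity in $t$, finally setting $\rho(\delta)=\delta t^*(\delta)-\max_{x\in[0,1]}\psi(t^*(\delta),x)$. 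You instead decompose $NI_{n+1}$ into exactly $N$ independent $\{0,1\}$-valued summands and invoke Hoeffding, whose rate $2e^{-2N\delta^2}$ depends only on the number of summands and their range; the uniformity in $x$ --- the one point the paper has to argue for explicitly --- comes for free, even where one of the two binomials degenerates. What the paper's computation buys in exchange is an explicit, Cram\'er-type (in principle sharper) rate; for the purpose at hand --- verifying Standing Hypothesis 1.1 of \citet{FS} and feeding the exponential estimate into Lemma 3.9 --- that sharpness is irrelevant, so your shortcut is perfectly adequate.

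One caveat, which you partially flag and which the paper's own proof silently shares: the two-sided probability is bounded by the \emph{sum} of two one-sided exponential bounds, so both arguments really deliver $\beta_\delta(N)\le 2\exp(-N\rho(\delta))$. Contrary to your closing remark, the factor $2$ cannot be removed by switching to the sharper Kullback--Leibler/Chernoff bound: for $N=1$ and $x=1$, $I_{n+1}$ is Bernoulli$(e^{-\mu-\gamma})$ and $F(1)=e^{-\mu-\gamma}$, so $|I_{n+1}-F(1)|\ge\min\{e^{-\mu-\gamma},1-e^{-\mu-\gamma}\}$ almost surely, whence $\beta_\delta(1)=1$ for small $\delta$ and no bound of the form $\exp(-\rho(\delta))$ with $\rho(\delta)>0$ can hold at $N=1$. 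The paper's final display drops the factor $2$ without comment, so its proof has exactly the same blemish as yours; the statement that is both true and sufficient for everything downstream is $\beta_\delta(N)\le 2\exp(-N\rho(\delta))$ for all $N\ge 1$ (equivalently, the stated bound for $N\ge N_0(\delta)$ after shrinking $\rho$), since this still yields $\lim_{N\to\infty}\beta_\delta(N)=0$ and $\limsup_{N\to\infty}\frac{1}{N}\log\beta_\delta(N)\le-\rho(\delta)$, which is all that is used in the proofs of Theorems~\ref{SISSICH} and~\ref{SIRSICH}.
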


\begin{proof} While we use standard large deviation estimates, we go through the details to ensure that the estimates can be taken uniformly in $x\in [0,1].$
Define $a=e^{-\mu-\gamma}$ and $b(x)=e^{-\mu}(1-e^{-\beta x})$. By the exponential Markov inequality, we have for all $t$
\begin{equation}\label{ld0}
\begin{aligned}
\P[N(I_{n+1}-F(x))&\le N\delta|I_n=x]\le e^{-tN\delta}\E[e^{t(N(I_{n+1}-F(x)))}|I_n=x]\\
&=e^{-tF(x)N-t\delta N}\left(1-a+ae^t\right)^{Nx}\left(1-b(x)+b(x)e^t\right)^{N(1-x)}
\end{aligned}
\end{equation}
Define
\[
\psi(x,t)=-tF(x)+x\log(1-a+ae^t)+(1-x)\log(1-b(x)+b(x)e^t).
\]
Taking $\log$ of equation~\eqref{ld0} and dividing by $N$ yields
\begin{equation}\label{ld1}
\frac{1}{N}\log \P[N(I_{n+1}-F(x))\ge N\delta|I_n=x]\le -\delta t+ \psi(t,x)
\end{equation}{}
for all $t\neq 0$, $x\in [0,1]$, and $\delta>0$. Similarly, one can show that
\begin{equation}\label{ld2}
\frac{1}{N}\log \P[N(F(x)-I_{n+1})\ge N\delta|I_n=x]\le -\delta t+ \psi(-t,x)
\end{equation}
for all $t\neq 0$, $x\in [0,1]$, and $\delta>0$. We have
\[
\frac{\partial \psi}{\partial t}=-F(x)+\frac{x a e^t}{1-a+ae^t}+\frac{(1-x)b(x)e^t}{1-b(x)+b(x)e^t}
\]
and
\[
\frac{\partial^2\psi}{\partial t^2}=ax\frac{ e^t(1-a)}{(1-a+ae^t)^2}+(1-x)b(x)\frac{e^t(1-b(x))}{1-b(x)+b(x)e^t}>0.
\]
As $\psi(0,x)=0=\frac{\partial \psi}{\partial t}(0,x)=0$, and $\psi$ is strictly convex in $t$, for all $\delta>0$, there exists $t^*(\delta)>0$ such that $\delta t^*(\delta)>\psi(-t^*(\delta),x)$ and $\delta t^*(\delta)>\psi(t^*(\delta),x)$ for all $x\in [0,1].$ Define
\[
\rho(\delta)=\delta t^*(\delta)-\max_{x\in[0,1]}\psi(t^*(\delta),x).
\]
Equations \eqref{ld1}--\eqref{ld2} imply that
\[
\P[|I_{n+1}-F(x)|\ge \delta|I_n=x]\le \exp(-N\rho(\delta))
\]
for all $x\in [0,1]$ and $\delta>0.$
\end{proof}

To prove the first result of Theorem~\ref{SISSICH}, assume that $\eig\le 1$. Theorem~\ref{SISDICH} implies that $0$ is globally stable for the deterministic model $I\mapsto F(I)$. Theorem 3.12 of \citet{FS}, which only requires Standing Hypothesis 1.1, implies that $\lim_{N\to\infty}\pi^N=\delta_0.$ Define $R(x)=F(x)/x$ for $x\in (0,1]$. Equation~\eqref{ComP1} in the proof of Theorem~\ref{SISDICH} implies that $R(x)\le \eig$ for $x\in (0,1].$ For $N\ge 1$, quasi-stationarity of $\pi^N$ implies
\[
\begin{aligned}
\lambda^N\sum_{i=1}^N\frac{i}{N} \pi_i^N=&\sum_{i=1}^N \E\left[I_{n+1}\Big|I_n=\frac{i}{N}\right] \pi_i^N\\
=&\sum_{i=1}^N F\left(\frac{i}{N}\right) \pi_i^N\\
=&\sum_{i=1}^N \frac{i}{N}R\left(\frac{i}{N}\right) \pi_i^N\\
\le&\eig \sum_{i=1}^N  \frac{i}{N} \pi_i^N.
\end{aligned}
\]
Since $\sum_{i=1}^N \frac{i}{N}\pi_i^N>0$, $\lambda^N\le \eig$ for all $N\ge 1$ as claimed.

To prove the second result of Theorem~\ref{SISSICH}, assume $\eig>1$ in which case Theorem~\ref{SISDICH} implies that there exists a unique positive globally stable equilibrium $I^*$ for the map $I\mapsto F(I)$. Assertion (b) of  Lemma 3.9 of \citet{FS}, implies that there exists $\delta>0$ such that $1-\lambda^N\le \beta_\delta(N)$ for all $N\ge 1$. Proposition~\ref{prop:ld} implies that
\[
\limsup_{N\to\infty}\frac{1}{N}\log(1-\lambda^N)\le -\rho(\delta).
\]

To complete the proof of the second assertion, we need to verify the assumption in Assertion (b') of Lemma 3.9 of \citet{FS}. Choose $\eta>0$ sufficiently small so that
\[
\min_{x\in [0,\eta]}(1-e^{-\mu-\gamma})^x\ge \exp(-\rho(\delta)/3) \mbox{ and }\min_{x\in[0,\eta]}
(1-e^{-\mu}(1-e^{-\beta x}))^{1-x}\ge \exp(-\rho(\delta)/3).
\]
Then
\[
\min_{x\in[0,\eta]}\P[I_{n+1}=0|I_n=x]\ge \exp(-2N\rho(\delta)/3)
\]
and
\[
\lim_{N\to\infty}\frac{\beta_\delta(N)}{\min_{x\in[0,\eta]}\P[I_{n+1}=0|I_n=x]}\le \lim_{N\to\infty}\frac{ \exp(-N\rho(\delta))}{ \exp(-2N\rho(\delta)/3)}=0
\]
which verifies the assumption of (b') of Lemma 3.9 of \citet{FS} and implies that
\[
\lim_{N\to\infty}\sum_{i/N \le \eta}\pi_i^N=0
\]
i.e. for any weak* limit point $\pi^*$ of $\pi^N$, $\pi^*([0,\eta])=0$. As $\lambda^N\to 1$, Proposition 3.11 of \citet{FS} implies that any weak* limit point of $\pi^N$ is invariant for the dynamics $x\mapsto F(x)$. As these weak* limit points are supported on $[\eta,1]$ and the only invariant measure for $x\mapsto F(x)$ on this interval is the Dirac measure $\delta_{I^*}$ and the unique positive fixed point, it follows that $\pi^N$ converges in the weak* topology to $\delta_{I^*}$ as claimed.

\section{Proof of Theorem~\ref{SIR}}
Let $E_f=(1,0)$. Denote by
\begin{displaymath}
  F(S,I):=\left (
         \begin{array}{cc}
             1-e^{-\mu}+S e^{-\mu-\beta I} \\
            e^{-\mu}S\big(1-e^{-\beta I}\big)+e^{-\mu-\gamma}I
        \end{array}
    \right).
\end{displaymath}
Then
\begin{displaymath}
  DF(E_f):=\left (
    \begin{array}{ccc}
        e^{-\mu} & -\beta e^{-\mu}\\
        0 & \beta e^{-\mu}+ e^{-\mu-\gamma}
    \end{array}
    \right ).
\end{displaymath}
It follows that the per-capita growth rate of the disease is still given by (\ref{BRN}) and  the basic reproduction number of (\ref{DSIR}) is still the expression in (\ref{BRN2}). Define the competitive cone
$$K:=\{(u,v):u \le 0,\ v\ge 0\}.$$
Then it is easy to check that $ DF(E_f)$ keeps $K$ invariant (see \citet{WJ}). Define
\begin{displaymath}
            L(S,I):=\left (
         \begin{array}{cc}
            1 \\
            0
        \end{array}
    \right)+\left (
    \begin{array}{cc}
     e^{-\mu} & -\beta e^{-\mu}\\
        0 & \beta e^{-\mu}+ e^{-\mu-\gamma}
    \end{array}
    \right )\left (
         \begin{array}{cc}
            S-1 \\
            I
        \end{array}
    \right).
\end{displaymath}
We shall verify that
\begin{equation}\label{ComP4}
  F(S,I)\le_K L(S,I),\ (S,I)\in X.
\end{equation}
(\ref{ComP4}) is equivalent to
\begin{equation}\label{ComP5}
    \left\{\begin{array}{l}
                 1-e^{-\mu}+S e^{-\mu-\beta I}\ge 1 +e^{-\mu}(S-1)-\beta Ie^{-\mu}, \\
                 \noalign{\medskip}
                  e^{-\mu}S\big(1-e^{-\beta I}\big)+e^{-\mu-\gamma}I\le \big(\beta e^{-\mu}+ e^{-\mu-\gamma}\big)I \\
 \end{array}\right.
\end{equation}
on $X$. By simple computation, (\ref{ComP5}) is equivalent to
\begin{equation}\label{ComP6}
    S(1-e^{-\beta I})\le \beta I,\ (S,I)\in X.
\end{equation}
Since $(S,I)\in X$, $S\le (1-I)$. Thus (\ref{ComP6}) follows immediately from (\ref{ComP2}), that is, the competitive ordering relation (\ref{ComP4}) holds.

Let $P_0:=(S_0,I_0)\in X,\ P_n:=F^n(P_0)=(S_n(P_0),I_n(P_0)),\ Q_n:=L^n(P_0)$. We shall show that
\begin{equation}\label{ComP7}
(1,0)^{\tau}\le_K P_n\le_K Q_n,\ n=1, 2,\ \cdots.
\end{equation}
The left inequality is obvious by the definition of competitive order and $X$. So we will prove the right one. (\ref{ComP4}) deduces that (\ref{ComP7}) holds for $n=1$. Suppose that (\ref{ComP7}) holds for $n$. Then using (\ref{ComP4}) and the order-preserving for $DF(E_f)$ , we obtain
$$P_{n+1}=F(P_n)\le_K L(P_n)\le_K L(Q_n) =Q_{n+1}.$$
By mathematical induction, (\ref{ComP7}) holds.

Let $\eig<1$. Then $Q_n\rightarrow (1,0)$ as $n\rightarrow \infty$. Therefore, (i) is proved by (\ref{ComP7}).

Suppose $\eig>1$. Then we shall prove that the system (\ref{DSIR}) is uniformly persistent with respect to $(X_0,\partial{X_0})$, that is, there exists $\eta>0$ such that
\begin{equation}\label{up}
\liminf_{n\rightarrow \infty}I_n(P_0)\ge \eta,\ {\rm for\ all}\ P_0=(S_0,I_0)\in X_0.
\end{equation}
 It is easy to see that $E_f$ is the
maximal compact invariant set in $\partial{X_0}$ which is positively invariant with respect to $F$ and lies on the stable manifold of $E_f$. Recalling the Hofbauer and So uniform persistence theorem of \citet{HS}, the system (\ref{DSIR}) is uniformly persistent if and only if

(a) $E_f$ is isolated in $X$, and

(b) $W^s(E_f)\subset \partial{X_0}$.\\
Since the disease free fixed point $E_f$ is hyperbolic and a saddle, (a) and (b) follows immediately from
the Hartman and Grobman theorem (see \citet{GH}). This verifies the uniform persistence and hence the system (\ref{DSIR}) admits an attractor in $X_0$.

It follows from (\ref{up}) that (\ref{DSIR}) contains a compact attractor $A_0\subset \{(S,I)\in X: I\ge \eta\}$. Besides, by the first equality of (\ref{DSIR}), we have $ S_n(P_0)\ge 1-e^{-\mu}$ for $n\ge 1$ and $P_0\in X$. This implies that $A_0\subset \{(S,I)\in X: S\ge 1-e^{-\mu}\}$. As  a result,
$$A_0\subset \{(S,I)\in X: S\ge 1-e^{-\mu},\ I\ge \eta,\ S+I \le 1\}.$$
This proves (ii).

It remains to prove (iii). First, we consider the system (\ref{DSIR}) with $\lambda_0=(\mu_0,\beta_0,0)$ and $\eig(\mu_0,\beta_0,0)>1$. We shall prove that it admits a globally stable fixed point $(1-I^*,I^*)$ in $X$, where $F(I^*)=I^*$ with $0<I^*<1$.

Let $\Delta_n:=S_n+I_n$. Then from (\ref{DSIR}) it follows that
\begin{equation}\label{sum}
\Delta_{n+1}=1-e^{-\mu_0}+e^{-\mu_0}\Delta_n.
\end{equation}
It is easy to see that (\ref{sum}) has the positive fixed point $1$ and all positive orbits tend to $1$ as $n\rightarrow +\infty$. Therefore, the system (\ref{DSIR}) is reduced to the system (\ref{DSIS}) with $\mu=\mu_0, \beta=\beta_0, \gamma=0$. Applying Theorem \ref{SISDICH}(ii), we get that the system (\ref{DSIS}) has a globally stable fixed point $I^*$ in $(0,1)$. Thus the system (\ref{DSIR}) admits a globally stable fixed point $(1-I^*,I^*)$ in $X$, where $F(I^*)=I^*$ with $0<I^*<1$. Recalling the proof of Theorem \ref{SISDICH}(ii)(see Figure \ref{fig:1111} and Figure \ref{fig:2222}(a)), we have
\begin{equation}\label{gradient}
|F'(I^*)|=\big|e^{-\mu_0-\beta_0I^*} \big(1+\beta_0(1-I^*)\big)\big|<1.
\end{equation}

Next, we will prove the spectral radius of the Jacobian matrix for $F(S,I)$ at the positive fixed point $E^*(S^*,I^*):=(1-I^*,I^*)$ is strictly less than $1$.

An easy calculation yields that
\begin{displaymath}
  DF(E^*):=\left (
    \begin{array}{ccc}
        e^{-\mu_0-\beta_0I^*} & -\beta_0S^* e^{-\mu_0-\beta_0I^*}\\
         e^{-\mu_0}(1-e^{-\beta_0I^*}) &  \beta_0S^* e^{-\mu_0-\beta_0I^*}+e^{-\mu_0}
    \end{array}
    \right ),
\end{displaymath}

\begin{displaymath}
    \left (
    \begin{array}{ccc}
        e^{-\mu_0-\beta_0I^*} & -\beta_0S^* e^{-\mu_0-\beta_0I^*}\\
         e^{-\mu_0}(1-e^{-\beta_0I^*}) &  \beta_0S^* e^{-\mu_0-\beta_0I^*}+e^{-\mu_0}
    \end{array}
    \right )\left (
         \begin{array}{cc}
            1 \\
            -1
        \end{array}
    \right)=F'(I^*)\left (
         \begin{array}{cc}
            1 \\
            -1
        \end{array}
        \right)
\end{displaymath}
and ${\rm det}DF(E^*)=e^{-\mu_0}F'(I^*)$. This proves that $F'(I^*)$ and $e^{-\mu_0}$ are two eigenvalues of $DF(E^*)$, that is, the spectral radius of $DF(E^*)$ is strictly less than $1$.

    In what follows, we shall use Theorem 2.1 of \cite{SW} to  prove that $C_P$ is open in the  parameter space $P$.

For this purpose, denote by $\|\cdot\|$ the Euclidean norm of $\mathbb{R}^3$ and $B_C(\lambda_0,s)$ the open ball in $P$ of radius $s$ about the point $\lambda_0$. We fix a $\lambda_0\in C_P$ and an $s_0\in (0,\mu_0)$ such that $\eig(\lambda)>1$ for any $\lambda\in B_C(\lambda_0,s_0)$. In order to consider the perturbed systems for parameters, we set $F_{\lambda_0}(S,I)$ the given mapping and $F_{\lambda}(S,I)=(S_{\lambda}(S,I),I_{\lambda}(S,I))$ the mappings for $\lambda\in B_C(\lambda_0,s_0)$. Define
\[R_{\lambda}(S,I)=\begin{cases}
\frac{I_{\lambda}(S,I)}{I},\quad {\rm if}\ I>0,\\
\beta e^{-\mu}S+e^{-\mu-\gamma}, \quad {\rm if}\ I=0.
\end{cases} \]
Then $R_{\lambda}(S,I)$ is continuous on $X$. By induction, it not difficult to prove that
\begin{equation}\label{niteration}
\begin{aligned}
F^n_{\lambda}(S,0)=&\big(1-e^{-n\mu}+e^{-n\mu}S,0\big)\\
R_{\lambda}(F^n_{\lambda}(S,0))=& \beta e^{-\mu}\big(1-e^{-n\mu}+e^{-n\mu}S\big)+ e^{-\mu-\gamma}
\end{aligned}
\end{equation}
for $n=1,2,\cdots$. We claim that there exist an $s_1\in (0,s_0]$, an integer $N>0$, $\delta>0$  and $\rho>1$, all only depending on $\lambda_0$, such that
\begin{equation}\label{niteration1}
I^N_{\lambda}(S,I)\ge \rho I\ {\rm for\ all}\ \lambda\in B_C(\lambda_0,s_1)\ {\rm and}\ I\in [0,\delta],\\
\end{equation}
where $F^n_{\lambda}(S,I)=(S^n_{\lambda}(S,I),I^n_{\lambda}(S,I))$.

From (\ref{niteration}) it follows that
$$R_{\lambda}(F^n_{\lambda}(S,0))\ge \eig(\lambda)-\beta e^{-(n+1)\mu}.$$
By the continuity of $\eig(\lambda)$, there exists an $s_1\in (0,s_0]$ such that $\eig(\lambda)> \frac{\eig(\lambda_0)+1}{2}$ for all $\lambda\in B_C(\lambda_0,s_1)$, and hence
$$R_{\lambda}(F^n_{\lambda}(S,0))\ge \frac{\eig(\lambda_0)+1}{2}-(\beta_0+s_0) e^{-(n+1)(\mu_0-s_0)}\ {\rm for\ all}\ \lambda\in B_C(\lambda_0,s_1).$$
This implies that there is an integer $N>0$, only depending on $\lambda_0$, such that
\begin{equation}\label {Niteration}
R_{\lambda}(F^N_{\lambda}(S,0)) > \frac{\eig(\lambda_0)+3}{4}\ {\rm for\ all}\ \lambda\in B_C(\lambda_0,s_1).
\end{equation}
Using (\ref{Niteration}) and  the uniform continuity of $R_{\lambda}(F^N_{\lambda}(S,I))$ on $\overline{B_C(\lambda_0,s_1)}\times X$, we obtain that there is a $\delta>0$, only depending on $\lambda_0$, such that
\[
R_{\lambda}(F^N_{\lambda}(S,I)) > \frac{\eig(\lambda_0)+7}{8}:=\rho\ {\rm for\ all}\ \lambda\in B_C(\lambda_0,s_1)\ {\rm and}\ 0\le I \le \delta,
\]
which implies that (\ref{niteration1}) holds, thus the claim is proved.

By (\ref{niteration1}), we get that for each $\lambda\in B_C(\lambda_0,s_1)$,
\[I^{mN}_{\lambda}(S,I)\ge \rho^mI\ {\rm if}\ F^{kN}_{\lambda}(S,I)\in [0,1]\times (0,\delta]\ {\rm for}\ k=0,1,\cdots,m-1.  \]
This shows that there exists at least a positive integer $m$ with the property
\[F^{mN}_{\lambda}(S,I)\in [0,1]\times (\delta,1]\ {\rm if}\ (S,I)\in [0,1]\times (0,\delta].  \]
Let $U=X_0,\ \Lambda= B_C(\lambda_0,s_1),\ B_{\lambda}=[0,1]\times [\delta,1]$. Then all assumptions of Theorem 2.1 of \cite{SW} have been checked. It follows that $B_C(\lambda_0,s_1)\subset C_P$. The proof is complete.
\section{Proof of Theorem~\ref{SIRSICH}}

As in the proof of Theorem~\ref{SISSICH}, we use results from \citet{FS} to prove Theorem~\ref{SIRSICH}. We begin by verifying that Standing Hypothesis 1.1 of \citet{FS} holds.  For all $\delta>0$ and $N\ge 1$, define
\[
\beta_\delta(N)=\sup_{x,y\ge 0, x+y\le 1}\P\left[\|(S_{n+1},I_{n+1}-F(x,y)\|_\infty \ge \delta | (S_n,I_n)=(x,y)\right]
\]
where $F(x,y)=(1-e^{-\mu}+e^{-\mu-\beta y}x,
xe^{-\mu}(1-e^{-\beta y})+ye^{-\mu-\gamma})$ corresponds to the right hand side of the deterministic model in equation~\eqref{DSIR} and $\|(x,y)\|_\infty=\max\{|x|,|y|\}$ corresponds to the sup norm. Standing Hypothesis 1.1 of \citep{FS} requires that $\lim_{N\to\infty} \beta_\delta(N)=0$ for all $\delta>0.$ Like Proposition~\ref{prop:ld} for the stochastic SIS model, the following proposition proves something stronger using large deviation estimates.

\begin{proposition}\label{prop2:ld} There exists a function $\rho:(0,\infty)\to (0,\infty)$ such that
\[
\beta_\delta(N)\le \exp(-N \rho(\delta))
\]
for all $N\ge 1$ and $\delta>0.$
\end{proposition}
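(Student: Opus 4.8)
The plan is to mirror the proof of Proposition~\ref{prop:ld} almost verbatim, replacing the single one-dimensional Chernoff estimate with four coordinatewise ones. Because the sup norm controls each coordinate, I would bound the probability in the definition of $\beta_\delta(N)$ by the sum of the four one-sided tail probabilities $\P[\pm(S_{n+1}-F_1(x,y))\ge\delta\mid(S_n,I_n)=(x,y)]$ and $\P[\pm(I_{n+1}-F_2(x,y))\ge\delta\mid(S_n,I_n)=(x,y)]$, where $F=(F_1,F_2)$ is the right-hand side of \eqref{DSIR}, and estimate each by an exponential Markov inequality. The essential preliminary is to disentangle the independence structure of \eqref{SSIR}: the $Nx$ susceptible, $Ny$ infected, and $N(1-x-y)$ removed individuals form three mutually independent groups, so $W_{n+1}$, $Z_{n+1}$, and the pair $(X_{n+1},Y_{n+1})$ are independent; within the infected group $(X_{n+1},Y_{n+1},\mbox{newly removed})$ is multinomial, so by binomial thinning the marginals are $X_{n+1}\sim\mathrm{Binom}(Ny,1-e^{-\mu})$ and $Y_{n+1}\sim\mathrm{Binom}(Ny,e^{-\mu-\gamma})$.

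The observation that makes the factorization work is that the two coordinates never invoke the two dependent variables at once: $NI_{n+1}=W_{n+1}+Y_{n+1}$ uses only $W_{n+1}$ and $Y_{n+1}$, which come from different groups and are therefore independent, while $NS_{n+1}=Nx-W_{n+1}+X_{n+1}+Z_{n+1}$ uses only $W_{n+1},X_{n+1},Z_{n+1}$, again from three distinct groups. Hence the conditional moment generating function of each coordinate factors into binomial moment generating functions, exactly as in the SIS case. For example, with $p_W=e^{-\mu}(1-e^{-\beta y})$ and $p_Y=e^{-\mu-\gamma}$,
\[
\E\!\left[e^{tNI_{n+1}}\mid(S_n,I_n)=(x,y)\right]=\left(1-p_W+p_We^{t}\right)^{Nx}\left(1-p_Y+p_Ye^{t}\right)^{Ny},
\]
and the analogous product for $S_{n+1}$ carries a factor $(1-p_W+p_We^{-t})^{Nx}$ for the $-W_{n+1}$ term together with $\mathrm{Binom}$ factors of success probability $1-e^{-\mu}$ on $Ny$ and $N(1-x-y)$ trials. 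Applying the exponential Markov inequality, taking logarithms, and dividing by $N$ then yields, for each coordinate and each sign $\epsilon=\pm1$, a Chernoff exponent of the form $-\delta t+\psi_{\epsilon}(t,x,y)$, where $\psi_{\epsilon}$ is a finite sum of terms $r\log(1-p+pe^{\pm t})$ (with $r\in\{x,y,1-x-y\}$) minus $t$ times the centering mean. A direct computation gives $\psi_\epsilon(0,x,y)=\partial_t\psi_\epsilon(0,x,y)=0$ and $\partial_t^2\psi_\epsilon\ge0$, precisely as for the function $\psi$ in Proposition~\ref{prop:ld}.

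From here the conclusion follows as in Proposition~\ref{prop:ld}. Each summand $r\log(1-p+pe^{\pm t})$ has a second $t$-derivative bounded uniformly for $|t|\le1$ and over the compact simplex $\{x,y\ge0,\ x+y\le1\}$, so a second-order Taylor expansion about $t=0$ gives $\psi_\epsilon(t,x,y)\le\frac12 M t^2$ for $|t|\le1$ with $M$ independent of $(x,y)$ and $\epsilon$. Thus for each $\delta>0$ there is a single $t^*(\delta)\in(0,1]$ with $\delta t^*(\delta)-\psi_\epsilon(t^*(\delta),x,y)>0$ uniformly in $(x,y)$ and $\epsilon$; taking $\rho(\delta)$ to be the minimum over the four cases of $\delta t^*(\delta)-\max_{(x,y)}\psi_\epsilon(t^*(\delta),x,y)$ and absorbing the constant from the four-term union bound into a slight decrease of $\rho$ (as is done for the two-term bound in Proposition~\ref{prop:ld}) gives $\beta_\delta(N)\le\exp(-N\rho(\delta))$ for all $N\ge1$.

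The step I would be most careful about, and the only place where the SIR geometry differs from the SIS one, is the degeneracy on the boundary of the simplex, where $\partial_t^2\psi_\epsilon$ need not be strictly positive: at $y=0$ the probability $p_W$ and the infected trial counts vanish, and at $x=0$ the $W_{n+1}$ factor drops out. At such points the corresponding randomness collapses (e.g.\ when $y=0$ the disease is extinct and $I_{n+1}=0=F_2$ almost surely), so the tail probability is either zero or controlled by the surviving nondegenerate factors; crucially, the uniform Taylor bound above needs only the upper bound $\partial_t^2\psi_\epsilon\le M$, not strict convexity, so $\rho(\delta)>0$ is unaffected. This is exactly the harmless degeneracy already present in Proposition~\ref{prop:ld}, so beyond careful bookkeeping of the independent binomial factors no new ideas are needed.
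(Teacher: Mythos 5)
Your proposal follows essentially the same route as the paper's proof: both reduce the sup-norm event to coordinatewise tail bounds, exploit the fact that each coordinate of \eqref{SSIR} is a signed sum of independent binomial variables so that the conditional moment generating function factors, apply the exponential Markov inequality with an exponent $\psi$ vanishing to second order at $t=0$ uniformly over the simplex, and take $\rho$ to be the minimum of the resulting rates. If anything, your bookkeeping is slightly more careful than the paper's, which groups $X_{n+1}+Z_{n+1}$ into a single $\mathrm{Binom}(N(1-x),1-e^{-\mu})$ variable and states the marginal success probability of $Y_{n+1}$ as $e^{-\gamma}$, whereas your thinning argument gives the correct marginal $e^{-\mu-\gamma}$; both treatments share the same harmless looseness about the union-bound constant.
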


\begin{proof}
We begin by observing that $NS_n-W_{n+1}$ and $X_{n+1}+Z_{n+1}$ in \eqref{SSIR} conditioned on $(S_n,I_n)=(x,y)$ are independent binomials where $NS_n-W_{n+1}$ has $Nx$ trials with probability of success $a_1(y)=1-e^{-\mu}(1-e^{-\beta y})$ and $X_{n+1}+Z_{n+1}$ has $N(1-x)$ trials with probability of success $b_1=1-e^{-\mu}.$  Using the exponential Markov inequality as in the proof of Proposition~\ref{prop:ld}, we get
\begin{equation}\label{ld21}
\begin{aligned}
\frac{1}{N}\log \P[N(S_{n+1}-F_1(x,y))\ge& N\delta|(S_n,I_n)=(x,y)]\le -\delta t+ \psi_1(t,x,y)\\
\frac{1}{N}\log \P[N(F_1(x,y)-S_{n+1})\ge & N\delta|(S_n,I_n)=(x,y)]\le -\delta t+ \psi_1(-t,x,y)
\end{aligned}
\end{equation}
for all $t,\delta$ and where
\[
\psi_1(t,x,y)=-tF_1(x,y)+x\log(1-a_1(y)+a_1(y)e^t)+(1-x)\log(1-b_1+b_1e^t).
\]
As $\psi_1(0,x,y)=0=\frac{\partial \psi}{\partial t}(0,x,y)=0$, and $\psi_1$ is strictly convex in $t$, for all $\delta>0$, there exists $t^*(\delta)>0$ such that $\delta t^*(\delta)>\psi_1(-t^*(\delta),x,y)$ and $\delta t^*(\delta)>\psi_1(t^*(\delta),x,y)$ for all $x,y\in [0,1]$ such that $x+y\le 1$. Define
\[
\rho_1(\delta)=\delta t^*(\delta)-\max_{x,y\in[0,1], x+y\le 1}\psi_1(t^*(\delta),x,y)>0.
\]
Equation \eqref{ld21} implies that
\[
\P[|S_{n+1}-F_1(x,y)|\ge \delta|(S_n,I_n)=(x,y)]\le \exp(-N\rho_1(\delta))
\]
for all $x,y\in [0,1]$ with $x+y\le 1$ and $\delta>0.$

$W_{n+1}$ and $Y_{n+1}$ conditioned on $(S_n,I_n)=(x,y)$ in equation \eqref{SSIR} are also independent binomial random variables where $W_{n+1}$ has $Nx$ trials with probability of success $e^{-\mu}(1-e^{-\beta y})$ and $Y_{n+1}$ has $Ny$ trials with probability of success $e^{-\gamma}$. Therefore using the exponential Markov inequality, one can show there exists a function $\rho_2:(0,\infty)\to(0,\infty)$ such that
\[
\P[|I_{n+1}-F_2(x,y)|\ge \delta|(S_n,I_n)=(x,y)]\le \exp(-N\rho_2(\delta))
\]
for all $x,y\in [0,1]$ with $x+y\le 1$ and $\delta>0.$ Setting $\rho(\delta)=\min\{\rho_1(\delta),\rho_2(\delta)\}$
completes the proof of the proposition.
\end{proof}

To prove the first result of Theorem~\ref{SIRSICH}, assume that $\eig\le 1$. Theorem~\ref{SIR} implies that $(1,0)$ is globally stable for the deterministic model $(S,I)\mapsto F(S,I)$. Theorem 3.12 of \citet{FS}, which only requires Standing Hypothesis 1.1, implies that $\lim_{N\to\infty}\pi^N=\delta_{(1,0)}$ in the weak* topology. Define $R(x,y)=F_2(x,y)/y$ for $y\in (0,1],x\in[0,1]$, and $x+y\le 1$. Equation~\eqref{ComP5} in the proof of Theorem~\ref{SIR} implies that $R(x,y)\le \eig$. For $N\ge 1$, quasi-stationarity of $\pi^N$ implies
\[
\begin{aligned}
\lambda^N\sum_{x,y\in\s_+}y \pi_{x,y}^N=&\sum_{x,y\in\s_+}\E\left[I_{n+1}\Big|(S_n,I_n)=(x,y)\right] \pi_{x,y}^N\\
=&\sum_{x,y\in\s_+} F_2\left(x,y\right) \pi_{x,y}^N\\
=&\sum_{x,y\in\s_+} y R\left(x,y\right) \pi_{x,y}^N\\
\le&\eig \sum_{x,y\in\s_+}  y \pi_{x,y}^N
\end{aligned}
\]
Since $\sum_{x,y\in\s_+}y\pi_{x,y}^N>0$, $\lambda^N\le \eig$ for all $N\ge 1$ as claimed.

To prove the second result of Theorem~\ref{SIRSICH}, assume $\eig>1$ in which case Theorem~\ref{SIR} implies that there exists a global, compact attractor $K\subset (0,1)\times (0,1)$ for $(S,I)\mapsto F(S,I)$. Assertion (b) of  Lemma 3.9 of \citet{FS}, implies that there exists $\delta>0$ such that $1-\lambda^N\le \beta_\delta(N)$ for all $N\ge 1$. Proposition~\ref{prop2:ld} implies that
\[
\limsup_{N\to\infty}\frac{1}{N}\log(1-\lambda^N)\le -\rho(\delta).
\]
Assumption in Assertion (b') of Lemma 3.9 of \citet{FS} holds for an argument similar to the proof of Theorem~\ref{SISSICH} and consequently any weak* limit point $\pi^*$ of $\pi^N$ satisfies $\pi^*(K)=1$. As $\lambda^N\to 1$, Proposition 3.11 of \citet{FS} implies that any weak* limit point of $\pi^N$ is invariant for the dynamics $(x,y)\mapsto F(x,y)$.

\bibliographystyle{EcologyLetters}

\bibliography{references}
\end{document}